\theoremstyle{plain}
\newtheorem{thm}{\protect\theoremname}
  \theoremstyle{definition}
  \newtheorem{defn}[thm]{\protect\definitionname}
  \theoremstyle{remark}
  \newtheorem{rem}{\protect\remarkname}
  \theoremstyle{plain}
  \newtheorem{prop}[thm]{\protect\propositionname}
  \theoremstyle{plain}
  \newtheorem{lem}[thm]{\protect\lemmaname}
\DeclareMathOperator*{\argmax}{argmax}
  \providecommand{\definitionname}{Definition}
  \providecommand{\lemmaname}{Lemma}
  \providecommand{\propositionname}{Proposition}
  \providecommand{\remarkname}{Remark}
\providecommand{\theoremname}{Theorem}
\title{Identifying the Support of Rectangular Signals in Gaussian Noise}
\author{Jiyao Kou \\ Department of Statistics, Stanford University}
\begin{document}

\maketitle

\begin{abstract}
We consider the problem of identifying the support of the block signal
in a sequence when both the length and the location of the block signal
are unknown. The multivariate version of this problem is also considered, in which we try
to identify the support of the rectangular signal in the hyper-rectangle. 
We allow the length of the block signal to grow polynomially
with the length of the sequence, which greatly generalizes the previous
results in \cite{jeng2010optimal}. A statistical boundary above which the
identification is possible is presented and an asymptotically optimal
and computationally efficient procedure is proposed under Gaussian
white noise in both the univariate and multivariate settings. 
The problem of block signal identification is shown to
have the same statistical difficulty as the corresponding problem
of detection in both the univariate and multivariate cases, in the sense that whenever we can detect the signal, we can identify the support of the signal. Some generalizations are also considered here: (1) We extend our theory to the case of multiple block signals. (2) We also discuss about the robust identification problem when the noise distribution is unspecified and the block signal identification problem under the exponential family setting.

Keywords: Block signal; Rectangular signal; Support identification; Multi-dimensional;
Penalized Scan.
\end{abstract}

\section{Introduction}

Block signal detection and identification in a long one-dimensional
sequence is a challenging and important problem and arises in many
applications, for example, in epidemiology \cite{gangnon2001weighted,neill2009empirical}
and Copy Number Variation \cite{jeng2010optimal,stranger2007relative}.
Block signal detection determines whether there exists any block signal
in the sequence while block signal identification further identifies
the support of the block signal. There has been a large body of work
on signal detection, see, e.g. \cite{glaz2012scan,dumbgen_multiscale_2001}
on the scan statistic; \cite{arias2005near,arias2011detection} for
geometric objects and cluster detection;
\cite{donoho2004higher,delaigle2009higher,donoho2015higher}
for sparse signals detection and identification; \cite{dumbgen2008multiscale,rivera2013optimal}
about density inference and \cite{walther2010optimal,rufibach2010block,chan2009detection,chan2011detection}
for more recent results on block signal detection using the penalized
scan and average likelihood ratio. However, most of the the previous research focus on the univariate case rather than multivariate case and the detection problem rather than the identification problem. \cite{arias2005near} considers the detection of block signal in both the univariate and multivariate  cases, but not the identification. Moreover, the results are actually not optimal unless the size of the block are on the smallest scale. For the block signal identification problem in the univariate case, in \cite{jeng2010optimal},
the authors characterized the identifiable region under the assumption of Gaussian
white noise and  $\log|I_{n}^{*}|=o(\log n)$,
where $n$ is the length of the sequence and $|I_{n}^{*}|$ is the
length of the block signal. They also proposed the Likelihood Ratio Selector
(LRS) procedure and established its optimality under the above assumptions.
However, their result excludes the common and important situation
where $|I_{n}^{*}|=n^{1-\beta}$ for $0<\beta<1$. In fact, it can be shown that LRS is
not optimal in this situation. Moreover, LRS procedure needs to pre-specify
a parameter $L$ which is some number greater than $|I_{n}^{*}|$.
Such $L$ is not always easy to pre-specify and the misspecification
may cause misidentification. 

In this paper, we establish the block signal identification theory
under a more general assumption which includes the case $|I_{n}^{*}|=n^{1-\beta}$
for $0<\beta<1$ in the univariate setting. The multivariate version of this problem is also considered.
A computational efficient procedure based on
the penalized scan statistic is proposed and its optimality is established
under Gaussian white noise assumption in both the univariate and multivariate settings. 
We note that in our procedure, there is
no unknown parameters that need to be pre-specified. Moreover, our
results show that the block signal detection and block signal identification
have the same statistical difficulty in both the univariate and multivariate settings, although the latter
seems to be more challenging than the former.

In addition, we consider in our paper several generalizations of
the block signal identification problem. Firstly, we consider an extension
to the case of multiple block signals. We show that under
certain assumptions, our procedure remains optimal in identifying
all block signals. Moreover, in the discussion section, we briefly
consider the robust identification problem when the noise distribution
is unspecified and discuss about the block signal identification
under the exponential family setting.

The rest of the paper is organized as follows. In Section \ref{sec:Block-identification-simple}, our identification procedure is introduced in the univariate case and its optimality is established under Gaussian white noise. In Section \ref{sec:Multi-dimensional-block-signal},
we extend our theorem to the multi-dimensional case and we show that our
procedure remains optimal in identifying rectangular signals in the hyper-rectangle. In Section \ref{sec:Signal-identification-More-Than-One},
we consider the situation when there are multiple block signals. In
Section \ref{sec:Simulation-Study}, a simulation study is
carried out to illustrate our previous results. In Section \ref{sec:Discussion},
we give a brief discussion about the identification under an unknown
noise distribution and under the exponential family setting. We also discuss
about some future research topics.

In the end of this section, we make some notations. For two series $a_n$ and $b_n$, we define $a_n \ll b_n$ if $a_n = o(b_n)$, or equivalently, $\frac{a_n}{b_n} \rightarrow 0$. We may use this notation and the small-o notation interchangeably. For a set of random variables $X_{n}$ and a corresponding set of constant $a_{n}$, we define $X_{n}=o_{p}(a_{n})$ if $X_{n}/a_{n}$ converges to 0 in probability. Similarly, we define $X_{n}=O_{p}(a_{n})$ if for any $\epsilon>0$, there exists a finite $M$ such that $P(|X_{n}/a_{n}|>M)<\epsilon$ for all $n$.

\section{Block signal identification under Gaussian white noise\label{sec:Block-identification-simple}}
Let's first consider block signal identification in the univariate setting. Suppose we observe that 
\begin{equation}
Y_{i}=\mu\mathbf{1}_{I_{n}^{*}}(i)+Z_{i},\qquad i=1,\ldots,n\label{eq:model}
\end{equation}
where $Z_{i}$ are i.i.d standard normal random variables and unknown
interval $\, I_{n}^{*}=(j_{n},k_{n}],\enskip0\le j_{n}<k_{n}<n$ and
\[
\boldsymbol{1}_{I_{n}^{*}}(i)=\begin{cases}
1 & i\in I_{n}^{*}\\
0 & i\notin I_{n}^{*}
\end{cases}.
\]
$\mu$ is an unknown number and for simplicity we assume that $\mu$ is non-negative. If $\mu$ is non-positive, we can replace $Y_{i}$ by $-Y_{i}$. Our goal is to estimate the support $I_{n}^{*}$ in model \eqref{eq:model} where $I^{*}_n=\emptyset$ means $\mu=0$.

For block signal \emph{detection}, which is a testing problem, we want to maximize the power of the
test while controlling the type I error. Similarly, in the corresponding block signal \emph{identification}
problem we want to approximately find the start and the end point
of the block signal (when exists) with high probability while control the type I error. To give the
definition of consistency, we introduce the following notation.
Let $H_{0}$ denote the null case that there exists no signal
in the sequence and $H_{1}$ denote the case where there exists a block signal $I^{*}_n$.
Define the (Hamming) distance between two intervals $I_{1}$ and $I_{2}$
as $D(I_{1},I_{2})=1-\frac{|I_{1}\cap I_{2}|}{\sqrt{|I_{1}||I_{2}|}}$,
where we take the convention $\frac{0}{0}=0$. The definition of consistency
for the block signal identification problem is given below.
\begin{defn}
We call a procedure $\mathcal{P}$ to be consistent if its estimated
interval $\hat{I}_{n}$ satisfies 
\begin{equation}
P_{H_{0}}(\hat{I}_{n}\ne\emptyset)\le\alpha\label{eq:type I error}
\end{equation}
and
\begin{equation}
P_{H_{1}}(D(\hat{I}_{n},I_{n}^{*})>\delta_{n})\rightarrow0\label{eq: type II error}
\end{equation}
for some $\delta_{n}=o(1)$, where $\emptyset$ denotes the empty
set and $\alpha$ denotes the significance level.
\end{defn}

In this section, we focus on those $I_{n}^{*}$ satisfying the following
property: there exists a $\kappa>0$, such that $|I_{n}^{*}|\ll n^{1-\kappa}$. This mild assumption includes all intervals with length $n^{1-\beta}$
for $0<\beta\le1$, but not those with length $n/\log n$. The set
of intervals considered here greatly extends those in \cite{jeng2010optimal},
in which the author requires $\log(|I_{n}^{*}|)\ll\log n$.

Before giving the identification procedure, we first introduce the concept of the
approximation set which is introduced in  \cite{rivera2013optimal, walther2010optimal, chan2011detection}. The idea of the approximation set is that we only need to consider
intervals with endpoints on a grid as long as we can approximate each
interval relatively well. In this section, we define our approximation set as below:
\[
\mathcal{I}_{app}=\bigcup_{\ell=1}^{\ell_{max}}\mathcal{I}_{app}(\ell)\cup\mathcal{I}_{small},\quad\text{where }\ell_{max}=\lfloor\log_{2}\frac{n}{\log n}\rfloor
\]
in which,
\[
\mathcal{I}_{app}(\ell)=\left\{ (j,k]:j,k\in\{id_{\ell},i=0,1,\ldots\}\text{ and }m_{\ell}<k-j\le2m_{\ell}\right\} ,
\]
\[
\mathcal{I}_{small}=\{(j,k]:k-j\le m_{\ell_{max}}\}.
\]
where $m_{\ell}=n2^{-\ell}$, $d_{\ell}=\lceil\frac{m_{\ell}}{6\sqrt{\ell}}\rceil$.
A simple counting argument shows that $|\mathcal{I}_{app}(\ell)| \le (\frac{n}{d_\ell}+1) (\frac{m_\ell}{d_\ell}+1) \le 144 \ell 2^{\ell}$ for $\ell = 1, \ldots, \ell_{max}$ and $|\mathcal{I}_{small}| \le 2n\log n$. Thus $|\mathcal{I}_{app}| \le \sum_{\ell=1}^{\ell_{max}} 144\ell 2^{\ell}+ 2n\log n = O(n\log n)$.

\begin{rem}

1) In fact we can let $d_{l}=\lceil\frac{m_{\ell}}{c\ell^{\zeta}}\rceil$
for some $c>0$ and $\zeta\ge0.5$. $c$ and $\zeta$ control the
precision of the approximation set and the choice is a trade off between
computational efficiency and approximation error: the larger the $c$
and $\zeta$, the better the approximation while the heavier the computation.

2) Some different approximation sets are also introduced in \cite{neill2004detecting} and \cite{arias2005near}. It is not clear whether those approximation sets can lead to the same optimal result. 

\end{rem}

Define $\boldsymbol{Y}(I)=\frac{\sum_{i \in I} Y_i}{\sqrt{|I|}}$. Our identification procedure, denoted by $\mathcal{P}_{n}$, works
as follows: If $\max_{I\in\mathcal{I}_{app}}(\boldsymbol{Y}(I) - \sqrt{2\log\frac{en}{|I|}})<\gamma_{n}(\alpha)$,
where $\gamma_{n}(\alpha)$ is the $(1-\alpha)$ quantile of
the null distribution of $\max_{I\in\mathcal{I}_{app}}(\mathbb{\mathbf{Y}}(I)-\sqrt{2\log\frac{en}{|I|}})$,
we claim there exists no signal, i.e. $\hat{I}_{n}=\emptyset$. Otherwise,
our estimated interval is
\begin{equation}
\hat{I}_{n}=\argmax_{I\in\mathcal{I}_{app}}\left(\boldsymbol{Y}(I)-\sqrt{2\log\frac{en}{|I|}}\right).\label{eq: Null-lemma}
\end{equation}
By Boole's inequality, we can show that under the null distribution, 
$P_n := \max_{I\in\mathcal{I}_{app}}\left(\boldsymbol{Y}(I)-\sqrt{2\log\frac{en}{|I|}}\right) = O_p(1)$, so $\limsup_{n \rightarrow \infty} \gamma_n(\alpha)<\infty$, see \cite{rivera2013optimal}.

It can be shown that $\mathcal{P}_{n}$ is optimal for block signal
identification. In fact, the procedure $\mathcal{P}_{n}$ is consistent
in identifying the support $I_{n}^{*}$ whenever the signal is in the
detectable region. We summarize this fact in the following theorem.
\begin{thm}
\label{thm: Main theorem}Assume Model \eqref{eq:model} and there
exists a $\kappa>0$ such that $|I_{n}^{*}|\ll n^{1-\kappa}$. If
$\mu\ge(\sqrt{2\log\frac{en}{|I_{n}^{*}|}}+b_{n})/\sqrt{|I_{n}^{*}|}$
with $b_{n}\rightarrow+\infty$, then our identification
procedure $\mathcal{P}_{n}$ is consistent with any $1\gg\delta_{n}\gg \sqrt{\log\log n} / \sqrt{\log{n}}$.
In addition, this procedure can be computed in $O(n\log n)$ time. 
\end{thm}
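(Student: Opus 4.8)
The plan is to check the two defining properties of consistency in turn and then bound the running time, with almost all of the difficulty concentrated in the type~II statement.

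The type~I bound \eqref{eq:type I error} and the complexity claim are immediate. Under $H_{0}$ we have $\mu=0$, so $\hat I_n\neq\emptyset$ is exactly the event $P_n\ge\gamma_n(\alpha)$, which has probability $\le\alpha$ by the definition of $\gamma_n(\alpha)$ as the $(1-\alpha)$-quantile of $P_n$. For the running time, $|\mathcal I_{app}|=O(n\log n)$ (as shown above) and each $\boldsymbol Y(I)$ and penalty is computable in $O(1)$ after an $O(n)$ prefix-sum precomputation, so scanning and maximizing costs $O(n\log n)$.

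For \eqref{eq: type II error}, write $T(I)=\boldsymbol Y(I)-\sqrt{2\log\frac{en}{|I|}}$ and use $\frac{|I\cap I_{n}^{*}|}{\sqrt{|I|}}=(1-D(I,I_{n}^{*}))\sqrt{|I_{n}^{*}|}$ to get the decomposition $T(I)=\mu(1-D(I,I_{n}^{*}))\sqrt{|I_{n}^{*}|}-\sqrt{2\log\frac{en}{|I|}}+Z_{I}$, where $Z_{I}\sim N(0,1)$. First I would produce a good approximant: the spacing $d_{\ell}=\lceil m_{\ell}/(6\sqrt{\ell})\rceil$ is fine enough that some $\tilde I\in\mathcal I_{app}$ has $D(\tilde I,I_{n}^{*})\lesssim 1/\sqrt{\ell^{*}}=O(1/\sqrt{\log n})$, where $\ell^{*}$ is the scale of $I_{n}^{*}$; since $|I_{n}^{*}|\ll n^{1-\kappa}$ forces $\ell^{*}\gtrsim\kappa\log_{2}n\to\infty$ and $\delta_{n}\gg\sqrt{\log\log n}/\sqrt{\log n}\gg 1/\sqrt{\log n}$, this yields $D(\tilde I,I_{n}^{*})=o(\delta_{n})$. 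Substituting $\mu\sqrt{|I_{n}^{*}|}\ge\sqrt{2\log\frac{en}{|I_{n}^{*}|}}+b_{n}$ into $T(\tilde I)$ shows $T(\tilde I)\to+\infty$ in probability; as $\limsup_n\gamma_{n}(\alpha)<\infty$, this already rules out $\hat I_{n}=\emptyset$ with probability $\to 1$ (recall $D(\emptyset,I_{n}^{*})=1$). It remains to rule out nonempty but distant estimates.

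Let $\mathcal F=\{I\in\mathcal I_{app}:D(I,I_{n}^{*})>\delta_{n}\}$. Since $\{\hat I_{n}\in\mathcal F\}\subseteq\{\max_{I\in\mathcal F}T(I)\ge T(\tilde I)\}$ and $T(\tilde I)\ge v_{n}:=\mu(1-D(\tilde I,I_{n}^{*}))\sqrt{|I_{n}^{*}|}-\sqrt{2\log\frac{en}{|\tilde I|}}-M$ off an event of probability $\le\epsilon$ (as $Z_{\tilde I}=O_{p}(1)$), a union bound reduces everything to showing
\[
\sum_{I\in\mathcal F}\exp\!\Big(-\tfrac12 g(I)^{2}\Big)\longrightarrow 0,\qquad g(I):=\mu\sqrt{|I_{n}^{*}|}\big(D(I,I_{n}^{*})-D(\tilde I,I_{n}^{*})\big)+\sqrt{2\log\tfrac{en}{|I|}}-\sqrt{2\log\tfrac{en}{|\tilde I|}}-M .
\]
Two points have to be settled, and they carry the real work. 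The first is that $g(I)>0$ for every far $I$: when $|I|\le|I_{n}^{*}|$ the penalty difference is nonnegative and $g$ is dominated by the signal deficit, while when $|I|>|I_{n}^{*}|$ the penalty relief is negative but is absorbed by the signal, because $1-D(I,I_{n}^{*})\le\sqrt{|I_{n}^{*}|/|I|}$ together with the monotonicity of $x\mapsto x\log(en/x)$ on $(0,n)$ gives $\sqrt{2\log\frac{en}{|I|}}\ge(1-D(I,I_{n}^{*}))\sqrt{2\log\frac{en}{|I_{n}^{*}|}}$, leaving the positive surplus $b_{n}\,D(I,I_{n}^{*})$ (with $b_{n}\to\infty$).

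The genuine obstacle is the convergence of the sum, and this is where the hypothesis on $\delta_{n}$ enters. I would slice $\mathcal F$ by dyadic distance shells $D(I,I_{n}^{*})\in(2^{k}\delta_{n},2^{k+1}\delta_{n}]$ and estimate, via a counting argument, the number $N(r)\approx (r|I_{n}^{*}|/d_{\ell^{*}})^{2}$ of grid-intervals within Hamming distance $r$ of $I_{n}^{*}$, so $\log N(r)=O(\log\log n)$. The dominant contribution is the innermost shell $r\approx\delta_{n}$ of comparable-scale intervals, where the penalty difference is negligible and the sum behaves like $N(\delta_{n})\exp(-\tfrac12(\mu\delta_{n}\sqrt{|I_{n}^{*}|})^{2})$; this vanishes exactly when $(\mu\delta_{n}\sqrt{|I_{n}^{*}|})^{2}\gg\log N(\delta_{n})$, and using $\mu\sqrt{|I_{n}^{*}|}\ge\sqrt{2\log\frac{en}{|I_{n}^{*}|}}$ this reduces precisely to $\delta_{n}^{2}\log n\gg\log\log n$, i.e. to $\delta_{n}\gg\sqrt{\log\log n}/\sqrt{\log n}$. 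The remaining shells are non-binding: for them the signal deficit is of order $\mu\sqrt{|I_{n}^{*}|}\gtrsim\sqrt{\log n}$, and once the penalty pays for the $O(n\log n)$ cardinality and $b_{n}\to\infty$ supplies the extra slack for the large-scale intervals, their total contribution is $o(1)$. Reassembling the shells gives the convergence, hence \eqref{eq: type II error} and consistency. I expect the shell-counting estimate for $N(r)$ (and keeping the fine-grid regime of $\mathcal I_{small}$ in step with the coarse-grid regime) to be the most delicate bookkeeping.
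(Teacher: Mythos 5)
Your proposal is correct in substance and reaches the same critical estimates, but it routes the hardest step differently from the paper. The paper partitions the far intervals into three classes --- $K_{0}$ (disjoint from $I_{n}^{*}$), $K_{near}$ (intersecting, with $|I|$ within a $\log n$ factor of $|I_{n}^{*}|$), and $K_{1}\backslash K_{near}$ --- and disposes of $K_{0}$ and $K_{1}\backslash K_{near}$ with \emph{no summation at all}: by Lemma \ref{lem: Dumbgen} (a D\"umbgen--Spokoiny-type bound) the penalized noise maximum over these classes is stochastically dominated by a single a.s.\ finite random variable uniformly in $n$, while $\boldsymbol{X}(\tilde{I})\stackrel{p}{\rightarrow}\infty$; a union bound with Gaussian tails is run only over the polylog-cardinality set $K_{near}$, which is exactly where $\delta_{n}\gg\sqrt{\log\log n}/\sqrt{\log n}$ enters, via display \eqref{eq: phi is small -1}. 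You instead run one global union bound over the whole far set, letting the penalty $\sqrt{2\log\frac{en}{|I|}}$ pay for the cardinality scale by scale; this amounts to reproving the $O_{p}(1)$ property of the penalized scan by Boole's inequality, which is legitimate and is in fact precisely how the paper argues in two dimensions (Proposition \ref{prop: High-dim-null-distribution}), where Lemma \ref{lem: Dumbgen} is unavailable. Your mechanism is sound: at scale $\ell$ the count $O(\ell 2^{\ell})$ is cancelled by $\exp(-\frac{1}{2}\cdot 2\log\frac{en}{|I|})\asymp 2^{-\ell}$, $b_{n}\rightarrow\infty$ supplies the factor driving the sum to zero on the coarse-scale and disjoint shells, and your reduction of the innermost shell to $\delta_{n}^{2}\log n\gg\log\log n$ with $\log N(\delta_{n})=O(\log\log n)$ matches the paper's $K_{near}$ analysis (cardinality $O(\log^{2}n)$, tails $\le\log^{-\eta}n$ for all $\eta$). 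What each route buys: yours is self-contained and transfers verbatim to higher dimensions, at the price of the shell/fine-grid bookkeeping you flag; the paper's is shorter in one dimension by offloading the summation to the cited lemma, and it compares $\boldsymbol{Y}(I)-\boldsymbol{Y}(\tilde{I})$ directly (variance $2D(I,\tilde{I})\le2$) where your decoupling through $T(\tilde{I})\ge v_{n}$ off an $\epsilon$-event is an equivalent, slightly lossier alternative. One small repair: your blanket claim that every far $I$ with $|I|>|I_{n}^{*}|$ retains the positive surplus $b_{n}D(I,I_{n}^{*})$ is not quite right once the approximation error and the slack $M$ are subtracted --- the surplus is $\ge D\,b_{n}(1-o(1))-O(1)$, which can be negative on the innermost shells (e.g.\ $D\asymp\delta_{n}$, $b_{n}=\log\log n$). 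This is harmless because there the correct driver is the one your shell estimate actually uses, namely $\mu\sqrt{|I_{n}^{*}|}\,\delta_{n}\ge\sqrt{2\log\frac{en}{|I_{n}^{*}|}}\,\delta_{n}\gtrsim\sqrt{\kappa\log n}\,\delta_{n}\gg\sqrt{\log\log n}$, with no help needed from $b_{n}$ --- exactly as in \eqref{eq: phi is small -1} --- but the positivity paragraph should be restated accordingly rather than via the surplus $b_{n}D$.
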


From Section 2 of \cite{chan2011detection} , $\mu\ge(\sqrt{2\log\frac{en}{|I_{n}^{*}|}}+b_{n})/\sqrt{|I_{n}^{*}|}$
with $b_{n}\rightarrow+\infty$ is necessary for any test to be consistent
in detecting the signal $I_{n}^{*}$. Since the identification problem
is more challenging than the corresponding detection problem, we can
conclude that $\mu\ge(\sqrt{2\log\frac{en}{|I_{n}^{*}|}}+b_{n})/\sqrt{|I_{n}^{*}|}$
with $b_{n}\rightarrow+\infty$ is necessary for any procedure to
be consistent in identifying the signal $I_{n}^{*}$. Thus our procedure
$\mathcal{P}_{n}$ is in fact optimal in block signal identification
under our current setting. 

We can view $\delta_n$ as the precision for the signal recovery. From Theorem
\ref{thm: Main theorem}, we know that for any $\delta_n = \phi_n \sqrt{\log\log n} / \sqrt{\log{n}}$ with $\phi_n \rightarrow \infty$ at any rate, we have 
\[
P_{H_{1}}(D(\hat{I}_{n},I_{n}^{*})<\delta_{n}) \rightarrow 1.
\]
Thus $\delta_n = \phi_n \sqrt{\log\log n} / \sqrt{\log{n}}$ is the upper bound for the (hamming) distance between our recovered signal and the true signal. As $\delta_n \rightarrow 0$, we can be sure that our recovered signal is sufficiently close to the true signal asymptotically.

We also note that the block signal identification
problem has the same statistical difficulty as the corresponding detection
problem, although the former one seems to be more challenging than
latter one. In the computational aspect, although there are $O(n^{2})$
number of possible intervals in the sequence, our procedure runs in
$O(n\log n)$ time, almost linear in the number of observations.
\begin{rem}

1) If we let $d_{l}=\lceil\frac{m_{\ell}}{c\ell^{\zeta}}\rceil$ for
some $c>0$ and $\zeta\ge0.5$ in the definition of approximation set, then Theorem \ref{thm: Main theorem} still holds and the computational complexity is $O(nc^2 \log^{2\zeta}n)$. 

2) If instead we define $\gamma_{n}(\alpha)$ as the $(1-\alpha)$
quantile of the null distribution of $\max_{0\le j<k\le n}\left(\boldsymbol{Y}((j,k])-\sqrt{2\log\frac{en}{k-j}}\right)$,
Theorem \ref{thm: Main theorem} would still hold with the same bound
on $\delta_{n}$. However, one would expect a faster simulation for
critical values using the previous definition of $\gamma_{n}(\alpha)$.
\end{rem}

We make the following comparison between our procedure and the LRS
procedure in \cite{jeng2010optimal}. (1) The LRS procedure requires the
length of the signal $|I_{n}^{*}|$ to satisfy $\log|I_{n}^{*}|=o(\log n)$.
In contrast, we allow $|I_{n}^{*}|=n^{1-\beta}$ for $0<\beta\le1$.
(2) The identification boundary for LRS procedure is $\frac{\sqrt{2\log n}}{\sqrt{I_{n}^{*}}}$,
which is optimal only for signals with the smallest spatial extent.
In contrast, the identification boundary for $\mathcal{P}_{n}$ is $\sqrt{2\log\frac{en}{|I_{n}^{*}|}}/\sqrt{|I_{n}^{*}|}$, which is optimal for a broad range of signals. (3) In the LRS procedure,
one needs to pre-specify the parameter $L$ which is some number greater
than the length of the signal. However, since the length of the signal
is unknown, it is not always easy to determine $L$ and the misspecification
of $L$ may cause misidentification. Although it is argued in \cite{jeng2010optimal} that
$L$ could be sometimes easily selected, we would prefer a procedure without
any unknown parameters. Our procedure $\mathcal{P}_{n}$ has no need
to specify the parameter $L$ or any other unknown parameters. (4)
The LRS procedure has a computational complexity of $O(nL)$. Depending
on the choice of $L$, the complexity could be large. In contrast, our
procedure $\mathcal{P}_{n}$ has a computational complexity of $O(n\log n)$,
regardless of the length of the signal. We also note that besides LRS
procedure in \cite{jeng2010optimal}, similar multiscale methods also appear in \cite{arias2005near}, \cite{arias2011detection} and \cite{neill2004detecting}. However, these methods do not lead to the same optimal results in our setting.

\section{\label{sec:Multi-dimensional-block-signal}Multi-dimensional rectangular
signal identification}

In this case, we consider the problem of identifying the rectangular
signal in the multi-dimensional hyper-rectangle. 

Consider the $D$-dimensional model
\begin{equation}
Y_{i}=\mu\mathbf{1}_{I_{n}^{*}}(i)+Z_{i},\qquad i=(i^{1},\ldots,i^{D})\in[1,\ldots,n]^{D}\label{eq:model-multi-dimension}
\end{equation}
where $Z_{i}$ are i.i.d standard normal random variable and unknown
rectangle 
\[
\enskip I_{n}^{*}=\prod_{d=1}^{D}(j_{n,d},k_{n,d}],\quad0\le j_{n,d}<k_{n,d}<n,\quad\text{for each }d=1,\ldots D
\]
with sides parallel to the axes and with arbitrary sizes and aspect
ratios. We denote the area of the hyper-rectangle by $|I_{n}^{*}|=\prod_{d=1}^{D}(k_{n,d}-j_{n,d})$.

All of the results in this section can be easily extended to higher
dimensions, but we will focus on the two-dimensional case $D=2$ to
simplify our notation. We use the superscript $\{2\}$ to denote we
are now considering the two-dimensional case.

We first introduce the approximation set for two-dimensional rectangle, which is a variation of the construction in \cite{walther2010optimal}.
Using $(j_{1},j_{2},k_{1},k_{2})$ denotes the rectangle with vertices
$(j_{1},k_{1})$, $(j_{2},k_{1})$, $(j_{2},k_{2})$ and $(j_{1},k_{2})$.
For fixed $\ell$ and $0\le i\le\ell$, define 
\begin{eqnarray*}
\mathcal{I}_{app}^{\{2\}}(\ell,i) & = & \{(j_{1},j_{2},k_{1},k_{2}):\enskip j_{1},j_{2}\enskip\in\enskip\{r^{(1)}d_{\ell,i}^{(1)},r^{(1)}=0,1,\ldots\}\\
 &  & \text{ and }k_{1},k_{2}\enskip\enskip\in\enskip\{r^{(2)}d_{\ell,i}^{(2)},r^{(2)}=0,1,\ldots\},\\
 &  & 0\le j_{1}\le\lfloor\ell^{\frac{1}{2}}2^{\ell-i}\rfloor,1\le j_{2}-j_{1}\le\lfloor\ell^{\frac{1}{2}}\rfloor,\\
 &  & 0\le k_{1}\le\lfloor\ell^{\frac{1}{2}}2^{i}\rfloor,1\le k_{2}-k_{1}\le\lfloor2\ell^{\frac{1}{2}}\rfloor\}
\end{eqnarray*}
where $d_{\ell,i}^{(1)}=\lceil\ell^{-\frac{1}{2}}2^{-\ell}2^{i}n\rceil$, $d_{\ell,i}^{(2)}=\lceil\ell^{-\frac{1}{2}}2^{-i}n\rceil$. Then
let $\mathcal{I}_{app}^{\{2\}}(\ell)=\cup_{i=0}^{\ell}\mathcal{I}_{app}^{\{2\}}(\ell,i).$
\\
Define $\ell_{max}=\lfloor\log_{2}\frac{n^{2}}{\log n}\rfloor$,
$m_{l}=n2^{-\ell}$ and 
\begin{eqnarray*}
\mathcal{I}_{small}^{\{2\}} & = & \{I,|I|\le m_{\ell_{max}}\}.
\end{eqnarray*}
Then our approximation set $\mathcal{I}_{app}^{\{2\}}$ is defined as: 
\[
\mathcal{I}_{app}^{\{2\}}=\bigcup_{\ell=1}^{\ell_{max}}\mathcal{I}_{app}^{\{2\}}(\ell)\cup\mathcal{I}_{small}^{\{2\}}.
\]
Define $\boldsymbol{Y}(I)=\frac{\sum_{i\in I}Y_{i}}{\sqrt{|I|}}$,
now we are ready to introduce the property of the penalized scan statistic
in the two-dimensional case in the following proposition. 
\begin{prop}
\label{prop: High-dim-null-distribution}
Define
\[
P_{n}^{\{2\}}=\max_{I\in\mathcal{I}_{app}^{\{2\}}}\left(\boldsymbol{Y}(I)-\sqrt{2\log\frac{en^{2}}{|I|}}\right),
\]
Under the null hypothesis, $P_{n}^{\{2\}} = O_p(1)$, i.e. $P_{n}^{\{2\}}$ are uniformly bounded in probability.
\end{prop}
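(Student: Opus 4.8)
The plan is to bound the upper tail of $P_n^{\{2\}}$ by Boole's inequality over $\mathcal{I}_{app}^{\{2\}}$ and to show that the penalty $\sqrt{2\log\frac{en^2}{|I|}}$ is calibrated so that the resulting deterministic sum stays bounded uniformly in $n$. Under $H_0$ we have $\mu=0$, so each $\boldsymbol{Y}(I)=|I|^{-1/2}\sum_{i\in I}Z_i$ is standard normal and the Gaussian tail bound gives $P(\boldsymbol{Y}(I)>x)\le e^{-x^2/2}$ for $x>0$. Writing $L_I=\log\frac{en^2}{|I|}$ and taking $x=t+\sqrt{2L_I}$, expanding the square factors the per-rectangle probability as
\[
P\!\left(\boldsymbol{Y}(I)-\sqrt{2L_I}>t\right)\le e^{-t^2/2}\,e^{-t\sqrt{2L_I}}\,\frac{|I|}{en^2}.
\]
Summing over the approximation set reduces the claim to showing that
\[
S(t):=\sum_{I\in\mathcal{I}_{app}^{\{2\}}}\frac{|I|}{en^2}\,e^{-t\sqrt{2L_I}}
\]
is finite and bounded on $[1,\infty)$ uniformly in $n$, since then $P(P_n^{\{2\}}>t)\le e^{-t^2/2}S(t)\to0$ as $t\to\infty$.

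I would control $S(t)$ scale by scale using the decomposition into $\mathcal{I}_{app}^{\{2\}}(\ell,i)$. A rectangle in $\mathcal{I}_{app}^{\{2\}}(\ell,i)$ has sides that are multiples of $d_{\ell,i}^{(1)}\approx\ell^{-1/2}2^{i-\ell}n$ and $d_{\ell,i}^{(2)}\approx\ell^{-1/2}2^{-i}n$, with at most $\lfloor\ell^{1/2}\rfloor$ and $\lfloor2\ell^{1/2}\rfloor$ grid steps respectively, so its area is at most of order $2^{-\ell}n^2$ and therefore $L_I\ge\ell\log2$ uniformly over the scale. Counting the admissible corners and side lengths and summing the areas directly yields $\sum_{I\in\mathcal{I}_{app}^{\{2\}}(\ell)}\frac{|I|}{en^2}\lesssim\ell^3$. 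Inserting the uniform bound $\sqrt{2L_I}\ge\sqrt{2\log2}\,\sqrt{\ell}$ gives
\[
S(t)\lesssim\sum_{\ell=1}^{\ell_{max}}\ell^3 e^{-\sqrt{2\log2}\,t\sqrt{\ell}}\le\sum_{\ell=1}^{\infty}\ell^3 e^{-\sqrt{2\log2}\,t\sqrt{\ell}}<\infty,
\]
and this upper bound is independent of $n$ and non-increasing in $t$.

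The crux of the argument, and the step I expect to require the most care, is precisely this uniformity in $n$. A naive union bound that discards the factor $e^{-t\sqrt{2L_I}}$ would leave $\sum_I\frac{|I|}{en^2}\asymp\ell_{max}^4\asymp(\log n)^4$, so the bound $e^{-t^2/2}(\log n)^4$ would not be uniformly small in $n$ for fixed $t$; it is the extra decay $e^{-t\sqrt{2L_I}}\le e^{-\sqrt{2\log2}\,t\sqrt\ell}$ supplied by the penalty that defeats the polynomial-in-$\ell$ (hence polylogarithmic-in-$n$) entropy growth. Verifying that the single penalty $\sqrt{2\log\frac{en^2}{|I|}}$ simultaneously produces the area-compensation factor $\frac{|I|}{en^2}$ and the $\sqrt\ell$-decay, together with getting the cardinality and area bounds for $\mathcal{I}_{app}^{\{2\}}(\ell,i)$ exactly right, is where the real work lies.

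Finally, the contribution of $\mathcal{I}_{small}^{\{2\}}$ is handled by the same computation: these rectangles have the smallest area, hence the largest penalty $L_I\ge\log\frac{en^2}{m_{\ell_{max}}}$, and there are at most polynomially many of them, so their total contribution to $S(t)$ is dominated by $(\log n)^{O(1)}e^{-2t\sqrt{\log n}}$, which is bounded uniformly in $n$ and vanishes as $t\to\infty$. Combining the two pieces gives $P(P_n^{\{2\}}>t)\le e^{-t^2/2}g(t)$ with $g$ finite and non-increasing on $[1,\infty)$, which is exactly $P_n^{\{2\}}=O_p(1)$; this parallels the one-dimensional bound cited from \cite{rivera2013optimal}, with the construction of \cite{walther2010optimal} providing the scale-and-aspect-ratio bookkeeping.
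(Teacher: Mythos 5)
Your proposal is correct and follows essentially the same route as the paper's proof: a union bound with the Gaussian tail, expanding $\tfrac{1}{2}\bigl(t+\sqrt{2L_I}\bigr)^{2}$ so that $e^{-L_I}=\frac{|I|}{en^{2}}$ cancels the $\asymp\ell^{3}2^{\ell}$ per-scale cardinality and $e^{-t\sqrt{2L_I}}\le e^{-ct\sqrt{\ell}}$ defeats the polynomial-in-$\ell$ factor, yielding a bound of the form $Ce^{-t^{2}/2}$ uniform in $n$. The only (harmless) deviation is your treatment of $\mathcal{I}_{small}^{\{2\}}$, which you bound in one lump via a polynomial count times $e^{-2t\sqrt{\log n}}$, whereas the paper slices $\mathcal{I}_{small}^{\{2\}}$ dyadically by area into pseudo-scales $\ell_{max}<\ell\le\lceil\log_{2}n^{2}\rceil$ and checks that each slice satisfies the same $2\ell^{3}2^{\ell}$ cardinality bound, so the same summation covers all scales at once.
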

We define $\gamma_{n}^{\{2\}}(\alpha)<\infty$ as the $(1-\alpha)$
quantile of the null distribution of $P_{n}^{\{2\}}$, which is well
defined by Proposition \ref{prop: High-dim-null-distribution}.

The identification procedure $\mathcal{P}_{n}^{\{2\}}$ in the two-dimensional
case works as follows: If $\max_{I\in\mathcal{I}_{app}^{\{2\}}}(\mathbb{\mathbf{Y}}(I)-\sqrt{2\log\frac{en^{2}}{|I|}})<\gamma_{n}^{\{2\}}(\alpha)$,
we claim there is no signal, i.e. $\hat{I}_{n}=\emptyset$. Otherwise,
our estimated rectangle
\[
\hat{I}_n=\argmax_{I\in\mathcal{I}_{app}^{\{2\}}}\left(\boldsymbol{Y}(I)-\sqrt{2\log\frac{en^{2}}{|I|}}\right).
\]
As in the one-dimensional case, one can establish the optimality for
the above procedure in the two-dimensional case, which is given in
the following theorem.
\begin{thm}
\label{thm: Main theorem-high-dim}Assume Model \eqref{eq:model-multi-dimension}
and there exists a $\kappa>0$ such that $|I_{n}^{*}|\ll n^{2-\kappa}$.
If $\mu\ge(\sqrt{2\log\frac{en^{2}}{|I_{n}^{*}|}}+b_{n})/\sqrt{|I_{n}^{*}|}$
with $b_{n}\rightarrow+\infty$, then our
procedure $\mathcal{P}_{n}^{\{2\}}$ is consistent with $1\gg\delta_{n}\gg\sqrt{\log\log n}/\sqrt{\log n}$.
In addition, this procedure can be computed in $O(n^{2}\log^{2}n)$
time. 
\end{thm}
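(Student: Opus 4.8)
\emph{Proof plan.} The plan is to mirror the proof of Theorem \ref{thm: Main theorem}, isolating the two genuinely new ingredients: the geometry of the two-dimensional approximation set and the corresponding uniform control of the penalized scan process. Throughout write $p(I)=\sqrt{2\log\frac{en^{2}}{|I|}}$ and $T(I)=\boldsymbol{Y}(I)-p(I)$, and decompose, under $H_{1}$, $\boldsymbol{Y}(I)=\mu\frac{|I\cap I_{n}^{*}|}{\sqrt{|I|}}+W(I)$ with $W(I)=\sum_{i\in I}Z_{i}/\sqrt{|I|}$. The process $W$ is centered Gaussian with $\mathrm{Var}(W(I))=1$ and, directly from the definition of $D$, $\mathbb{E}(W(I)-W(J))^{2}=2D(I,J)$, so Hamming distance is, up to a constant, the canonical metric of $W$. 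The type I control \eqref{eq:type I error} is immediate: by construction $\hat I_{n}\neq\emptyset$ exactly when $P_{n}^{\{2\}}\ge\gamma_{n}^{\{2\}}(\alpha)$, and $\gamma_{n}^{\{2\}}(\alpha)$ is the $(1-\alpha)$ quantile of $P_{n}^{\{2\}}$, which is finite by Proposition \ref{prop: High-dim-null-distribution}. It therefore remains to verify \eqref{eq: type II error}, which I split into (a) showing $\hat I_{n}\neq\emptyset$ with probability tending to one and (b) localizing the maximizer to within $\delta_{n}$ of $I_{n}^{*}$.

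For (a) I would first establish a two-dimensional approximation lemma: every rectangle $I_{n}^{*}$ admits a $\tilde I\in\mathcal{I}_{app}^{\{2\}}$ lying in the scale/aspect cell $(\ell^{*},i^{*})$ matching $I_{n}^{*}$, with $D(\tilde I,I_{n}^{*})=O(\ell^{*\,-1/2})$ and $p(\tilde I)\le p(I_{n}^{*})+O(1)$. This is the point of the $\ell^{1/2}$ factor in the grid spacings $d_{\ell,i}^{(1)},d_{\ell,i}^{(2)}$, exactly as in the one-dimensional construction; since $\ell^{*}\asymp\log n$ the approximation error is $O(1/\sqrt{\log n})=o(\delta_{n})$. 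Consequently $T(\tilde I)\ge\mu\sqrt{|I_{n}^{*}|}\,(1-o(\delta_{n}))-p(I_{n}^{*})-O(1)+W(\tilde I)$; using $\mu\sqrt{|I_{n}^{*}|}\ge p(I_{n}^{*})+b_{n}$ and $p(I_{n}^{*})=O(\sqrt{\log n})$, the deterministic part is at least $b_{n}-O(1)\to\infty$, while $W(\tilde I)=O_{p}(1)$. Hence $P_{n}^{\{2\}}\ge T(\tilde I)\to\infty$ in probability, and since $\gamma_{n}^{\{2\}}(\alpha)=O(1)$ the procedure reports a nonempty estimate with probability tending to one.

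Step (b) is the main obstacle. It suffices to show $P_{H_{1}}\big(\exists\, I\in\mathcal{I}_{app}^{\{2\}}:\,D(I,I_{n}^{*})\ge\delta_{n},\ T(I)\ge T(\tilde I)\big)\to0$. On a high-probability event where $T(\tilde I)\ge b_{n}-C$, the inequality $T(I)\ge T(\tilde I)$ forces $W(I)\ge\theta(I)$, where a short computation using the boundary relation $\mu\sqrt{|I_{n}^{*}|}=p(I_{n}^{*})+b_{n}$ (the hardest case) gives $\theta(I)=p(I_{n}^{*})D(I)+b_{n}D(I)+[p(I)-p(I_{n}^{*})]-C$. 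The key deterministic lemma is that $\theta(I)\ge c\,p(I_{n}^{*})D(I)$ uniformly over $I\in\mathcal{I}_{app}^{\{2\}}$ for some $c>0$: when $I$ is smaller than $I_{n}^{*}$ the penalty gap $p(I)-p(I_{n}^{*})$ is nonnegative and only helps, while when $I$ is larger, writing $r=|I|/|I_{n}^{*}|$, one has $p(I)-p(I_{n}^{*})\approx-\log r/p(I_{n}^{*})$ and $D(I)=1-r^{-1/2}$, and this gap is dominated by $p(I_{n}^{*})D(I)$ because $|I|\le n^{2}$ forces $\log r\le\tfrac12 p(I_{n}^{*})^{2}$. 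Granting this, a union bound gives $P(\cdot)\le\sum_{I:D(I)\ge\delta_{n}}\exp(-\tfrac12 c^{2}p(I_{n}^{*})^{2}D(I)^{2})$, which I would organize by peeling over dyadic bands $D(I)\in[\delta,2\delta]$ and over scales. The decisive count is the number of approximation rectangles within Hamming distance $\delta$ of $I_{n}^{*}$: matching each of the four corner coordinates to the grid shows this is $O((\delta\sqrt{\log n})^{4})$, i.e.\ only polylogarithmic. Balancing $c^{2}p(I_{n}^{*})^{2}\delta^{2}\asymp\kappa\log n\cdot\delta^{2}$ against the logarithm of this count, which is $\asymp\log\log n$, shows the band sum tends to zero precisely when $\log n\cdot\delta_{n}^{2}\gg\log\log n$, i.e.\ $\delta_{n}\gg\sqrt{\log\log n}/\sqrt{\log n}$; the peeling is summable because $\delta_{n}$ already exceeds the maximizer $\asymp1/\sqrt{\log n}$ of the band term, so the smallest band dominates.

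The hardest and most delicate part is this uniform control: one must simultaneously use the penalty $p(I)$ to tame the coarse scales and large rectangles (where many candidates compete but each is heavily penalized) and the local metric entropy of Hamming balls to tame fine perturbations of $I_{n}^{*}$ (where the penalty is essentially constant but the candidates are highly correlated), and verify the geometric inequality $\theta(I)\gtrsim p(I_{n}^{*})D(I)$ across all shapes, scales and aspect ratios of the two-dimensional set, which is genuinely more involved than its one-dimensional counterpart and is where Proposition \ref{prop: High-dim-null-distribution} and the specific construction of $\mathcal{I}_{app}^{\{2\}}$ enter. Finally, for the computational claim I would note $|\mathcal{I}_{app}^{\{2\}}|=O(n^{2}\log^{2}n)$ from the cardinality count of the cells $\mathcal{I}_{app}^{\{2\}}(\ell,i)$ and $\mathcal{I}_{small}^{\{2\}}$, and that after an $O(n^{2})$ two-dimensional prefix-sum (summed-area table) precomputation each $\boldsymbol{Y}(I)$ is evaluated in $O(1)$, so the whole scan runs in $O(n^{2}\log^{2}n)$ time.
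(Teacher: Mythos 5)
Your overall architecture matches the paper's (type I control from the definition of $\gamma_{n}^{\{2\}}(\alpha)$, a two-dimensional approximation lemma giving $\tilde I$ with $D(\tilde I,I_{n}^{*})=O(1/\sqrt{\log n})$ and $p(\tilde I)-p(I_{n}^{*})=o(1)$, the lower bound $T(\tilde I)\ge b_{n}-O_{p}(1)$, and a near-scale entropy-vs-Gaussian-tail balance that produces exactly $\delta_{n}\gg\sqrt{\log\log n}/\sqrt{\log n}$; your decoupling of $T(I)\ge T(\tilde I)$ into $\{T(\tilde I)<b_{n}-C\}\cup\{W(I)\ge\theta(I)\}$ is a clean and valid variant of the paper's pairwise comparison). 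But your ``key deterministic lemma'' $\theta(I)\ge c\,p(I_{n}^{*})D(I)$ uniformly over $\mathcal{I}_{app}^{\{2\}}$ is false, and this breaks the unified union bound that your step (b) rests on. Counterexample: take $|I|\asymp n^{2}$, so $p(I)=O(1)$ and $D(I)\approx 1$; then $\theta(I)=[p(I)-p(I_{n}^{*})]+p(I_{n}^{*})D(I)+b_{n}D(I)-C\approx b_{n}+O(1)$, whereas $c\,p(I_{n}^{*})D(I)\asymp\sqrt{\kappa\log n}$. Since $b_{n}\to\infty$ arbitrarily slowly, the inequality fails on all coarse scales; your heuristic $\log r\le\tfrac12 p(I_{n}^{*})^{2}$ gives at best $\theta(I)\gtrsim b_{n}$ there, not $\gtrsim p(I_{n}^{*})$. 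The failure is two-sided: even on fine scales, where (as you correctly note) the penalty gap $p(I)-p(I_{n}^{*})\ge 0$ only helps and your inequality does hold, discarding that gap and summing $\exp(-\tfrac12 c^{2}p(I_{n}^{*})^{2}D(I)^{2})$ over the band $D\approx 1$ is not summable: the number of rectangles at scale $\ell$ is $\asymp\ell^{3}2^{\ell}$, which exceeds $\exp(\tfrac12 p(I_{n}^{*})^{2})=en^{2}/|I_{n}^{*}|$ for every scale appreciably finer than the true one (e.g.\ $|I_{n}^{*}|=n^{3/2}$ against the $\asymp n^{2}\log n$ rectangles in $\mathcal{I}_{small}^{\{2\}}$). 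The flat exponent $p(I_{n}^{*})^{2}D^{2}$ simply cannot absorb the $2^{\ell}$-type entropy away from the true scale.

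The paper's proof shows the necessary repair: the Gaussian-tail-plus-polylog-entropy argument is reserved for $K_{near}$, defined by \emph{both sides} of $I$ being within $\log n$ factors of $a^{*},b^{*}$ (side control, not area control, is needed both for the $O(\log^{4}n)$ count and so that on the complement $1-D(I,I_{n}^{*})\le 1/\sqrt{\log n}$), while everything outside $K_{near}$, together with $K_{0}$, is handled wholesale by Proposition \ref{prop: High-dim-null-distribution}: there the drift contribution is at most $\frac{1}{\sqrt{\log n}}(\sqrt{2\log\frac{en^{2}}{|I_{n}^{*}|}}+b_{n})=O(1)+o(b_{n})$, so $\max_{I}(\boldsymbol{Z}(I)-p(I))=O_{p}(1)$ beats the threshold growing like $b_{n}$ --- and crucially this step needs only $b_{n}\to\infty$, with no $D$-dependent exponent at all. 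Your plan does gesture at using ``the penalty to tame the coarse scales,'' but as written the exponent you carry into the peeling is the wrong one, so the dyadic band sum does not converge; you must split the index set by scale/aspect relative to $I_{n}^{*}$ and invoke the uniform penalized-scan bound (Proposition \ref{prop: High-dim-null-distribution}, the 2D analogue of Lemma \ref{lem: Dumbgen}) on the far part, rather than run a single union bound with exponent $c^{2}p(I_{n}^{*})^{2}D^{2}/2$. Your type I argument, approximation lemma, near-band balancing, and the $O(n^{2}\log^{2}n)$ computational claim via summed-area tables are all correct and agree with the paper.
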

For a very similar argument as in Section 2 of \cite{chan2011detection}
, we can get $\mu\ge(\sqrt{2\log\frac{en^{2}}{|I_{n}^{*}|}}+b_{n})/\sqrt{|I_{n}^{*}|}$
is necessary for any test to be consistent in detecting the rectangular
signal in the two-dimensional case and as a result $\mu\ge(\sqrt{2\log\frac{en^{2}}{|I_{n}^{*}|}}+b_{n})/\sqrt{|I_{n}^{*}|}$
is also necessary for any procedure to be consistent in identifying
the signal in the two-dimensional case. Thus our procedure is optimal
under our current setting. Again, we conclude that in the two-dimensional
case, the identification problem and the corresponding detection problem
has the same statistical difficulty. Although there are $O(n^{4})$
number of possible rectangles, our algorithm runs in $O(n^{2}\log^{2}n)$,
almost linear in the number of observations $n^{2}$. 
In general, in $d$ dimensional case, our algorithm runs in  $O(n^{d}\log^{d}n)$,
almost linear in the number of observations $n^{d}$.

\section{\label{sec:Signal-identification-More-Than-One}Signal identification
for multiple signals}

In the previous two sections, we focus on the situation where there
exists only one signal. In this section, we consider the situation
where there are multiple signals. We will only discuss the univariate case in this section for notation simplicity, but all our theory can be extended to multivariate case 
by using the corresponding approximation set. To further simplify the notation and
avoid confusion, we suppress the subscript $n$ in $I_{n}^{*}$ in
this section. We denote the set of true block signals $I^{*}=\{I_{1}^{*},\ldots,I_{K}^{*}\}$,
where $K$ is the number of block signals. Our model is:

\begin{equation}
Y_{i}=\mu_{I^{*}}(i)+Z_{i},\qquad i=1,\ldots,n\label{eq:model-multiple signal}
\end{equation}
where $Z_{i}$ are i.i.d standard normal random variable, the unknown
set of intervals $I^{*}=\{I_{1}^{*},\ldots,I_{K}^{*}\}=\{(j_{1},k_{1}],\ldots,(j_{K},k_{K}],\,0\le j_{1}<k_{1}<j_{2}<\ldots<k_{K}\le n\}$
and 
\[
\mu_{I^{*}}(i)=\begin{cases}
\mu_{I_{j}^{*}} & i\in I_{j}^{*}\\
0 & \text{otherwise}
\end{cases}.
\]
In another word, $\mu_{I^{*}}(i)$ has constant value $\mu_{I_{j}^{*}}$ on each interval
$I_{j}^{*}$, $j=1,\ldots K$ and 0 otherwise.

We give the following definition of consistency for the identification
procedure for multiple block signals. 
\begin{defn}
\label{Def: Multiple signal Consistency}We call a procedure $\mathcal{P}$
is consistent in identifying multiple block signals if its estimated set of signals (intervals) $\hat{I}=\{\hat{I}_{1},\ldots,\hat{I}_{\hat{K}}\}$
satisfies 
\begin{equation}
P_{H_{0}}(\hat{I}\ne\emptyset)\le\alpha\label{eq: Type I of multiple signals}
\end{equation}
\begin{equation}
E_{H_{1}}(\hat{K})\le K+C(\alpha) + o(1) \label{eq: Expect of multiple signals}
\end{equation}
and
\begin{equation}
P_{H_{1}}(\max_{j=1}^{K}\min_{i = 1}^{\hat{K}} D(\hat{I}_{i},I_{j}^{*})>\delta_{n})\rightarrow0\label{eq: Consistency of multiple signals}
\end{equation}
for some $\delta_{n}=o(1)$, where $\emptyset$ denotes the empty
set, $\alpha$ denotes the significant level, $C(\alpha)>0$ is a
function that depends on $\alpha$ and $\lim_{\alpha\rightarrow0}C(\alpha)=0$.
\end{defn}
Equation \eqref{eq: Consistency of multiple signals} in fact implies that
$P_{H_{1}}(\hat{K}\ge K)\rightarrow1$ as $n\rightarrow\infty$. Together
with Equation \eqref{eq: Expect of multiple signals}, we can conclude that
if a procedure is consistent asymptotically, it can identify all true
intervals and may include some false intervals. In expectation, the
number of the false intervals our procedure identifies goes to 0 as
$\alpha\rightarrow0$ and $n\rightarrow\infty$. 

Note this slightly complicated definition reflects a fundamental difficulty
in this problem: even after we correctly identifying all signals,
we get back to the null cases and we cannot avoid a probability less
than $\alpha$ that we include a false interval. One can take $\alpha$
to be small and Equation \eqref{eq: Type I of multiple signals}, \eqref{eq: Expect of multiple signals}
and \eqref{eq: Consistency of multiple signals} still hold so this
effect is minimal.

Theorem \ref{thm: Main theorem} can be generalized to the current
situation as long as $K=O(\log^{p}n)$ for some $p>0$. Here we also
assume the minimum distance between two signals $d_{min}\gg\max_{j=1}^{K}|I_{j}^{*}|\log n$. 

Define $\mathcal{I}_{app}$ and $\gamma_{n}(\alpha)$ as in Section
\ref{sec:Block-identification-simple}. Our identification procedure,
denote by $\mathcal{P}_{n,multi}$ for multiple signals works as follows:

Initialize our result set $\hat{I}$ as empty set $\emptyset$. Denote
$\mathcal{I}_{app}^{1}=\mathcal{I}_{app}$.

Let $i=1$, repeat the following step until $\max_{I\in\mathcal{I}_{app}^{i}}(\mathbb{\mathbf{Y}}(I)-\sqrt{2\log\frac{en}{|I|}})<\gamma_{n}(\alpha)$:

\{

$\qquad$Denote $\hat{I}_{i}=\text{\ensuremath{\arg}max}_{I\in\mathcal{I}_{app}^{i}}(\mathbb{\mathbf{Y}}(I)-\sqrt{2\log\frac{en}{|I|}})$,
then set $\hat{I}=\hat{I}\cup\hat{I}_{i}$,

$\qquad$Let $\mathcal{I}_{app}^{i+1}=\mathcal{I}_{app}^{i}\backslash\{I\in\mathcal{I}_{app}^{i}:I\cap\hat{I}_{i}\ne\emptyset\}$.

$\qquad$Increase $i$ by 1.

\}

We have the following theorem regarding the consistency of the procedure
$\mathcal{P}_{n,multi}$.
\begin{thm}
\label{thm: More than one signal}Assume Model \eqref{eq:model-multiple signal} and there exists a $\kappa>0$ such that $|I_j^*| \ll n^{1-k}$ for each $j=1,\ldots,K$ and the minimum distance between two signals $d_{min}\gg\max_{j=1}^{K}|I_{j}^{*}|\log n$ and $K=O(\log^{p}n)$ for some $p>0$, if for each $j=1,\ldots,K$, $\mu_{I_{j}^*}\ge(\sqrt{2\log\frac{en}{|I_{j}^{*}|}}+b_{n,j})/\sqrt{|I_{j}^{*}|}$
with $b_{n,j}/\sqrt{\log\log n}\rightarrow+\infty$,
then our procedure $\mathcal{P}_{n,multi}$ is consistent with $1\gg\delta_{n}\gg  \sqrt{\log\log n}/\sqrt{\log n}$. 
\end{thm}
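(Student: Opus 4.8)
The plan is to bootstrap off Theorem~\ref{thm: Main theorem} by exploiting the separation hypothesis $d_{min}\gg\max_{j}|I_{j}^{*}|\log n$, which forces the greedy loop to behave, at each iteration, like the single-signal procedure of Section~\ref{sec:Block-identification-simple} restricted to a neighborhood of one of the still-unidentified signals. Throughout I would work on the high-probability event, furnished by the same Boole's inequality over $\mathcal{I}_{app}$ (of size $O(n\log n)$) that gives $P_{n}=O_{p}(1)$, on which the centered statistic $\mathbf{Y}(I)-\mathbb{E}\,\mathbf{Y}(I)-\sqrt{2\log\frac{en}{|I|}}$ is uniformly $O_{p}(1)$ and in particular below $\gamma_{n}(\alpha)$ for every $I$ carrying no signal mass. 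The analysis then separates a deterministic ``signal-geometry'' part from the stochastic part already controlled in Section~\ref{sec:Block-identification-simple}. Note first that the updates remove every interval meeting $\hat{I}_{i}$, so the detected intervals $\hat{I}_{1},\hat{I}_{2},\ldots$ are automatically pairwise disjoint, whence $\hat{K}\le n$ deterministically.

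First I would establish a \emph{localization} step: with probability tending to one, the maximizer at every iteration is attained at an interval whose length is comparable to a single $|I_{j}^{*}|$ and which lies in an $O(|I_{j}^{*}|)$-neighborhood of exactly one remaining signal. The mechanism is that any interval reaching across a gap has length at least $d_{min}$, so its mean $\sum_{i\in I}\mathbb{E}\,Y_{i}/\sqrt{|I|}$ spreads the available signal mass $\sum_{j}\mu_{I_{j}^{*}}|I_{j}^{*}|$ over a length $\gg\max_{j}|I_{j}^{*}|\log n$; combining the bound $\sum_{j}\mu_{I_{j}^{*}}|I_{j}^{*}|\lesssim\sqrt{K}\,(\sum_{j}\mu_{I_{j}^{*}}^{2}|I_{j}^{*}|^{2})^{1/2}$ with the separation and $K=O(\log^{p}n)$ shows the penalized value of such a spanning interval is strictly dominated by the value $\ge b_{n,j}+O_{p}(1)$ achieved by the tight interval around a single remaining signal. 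Consequently the maximizer is always a ``local'' interval, and on its neighborhood the estimation is precisely the single-signal problem, so Theorem~\ref{thm: Main theorem} (applied locally) yields $D(\hat{I}_{i},I_{j}^{*})\le\delta_{n}$ for some not-yet-found $j$ with $1\gg\delta_{n}\gg\sqrt{\log\log n}/\sqrt{\log n}$, and a value exceeding $\gamma_{n}(\alpha)$, so the loop does not terminate while a signal remains. The requirement that this hold \emph{simultaneously} across all $K=O(\log^{p}n)$ local fields and all iterations is exactly what the strengthened condition $b_{n,j}/\sqrt{\log\log n}\to+\infty$ buys through a union bound.

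Next I would verify that the deletion step is \emph{local and non-interfering}. Since $|\hat{I}_{i}|\asymp|I_{j}^{*}|\le\max_{j}|I_{j}^{*}|\ll d_{min}/\log n$, every interval meeting $\hat{I}_{i}$ either sits inside a $d_{min}$-neighborhood of $I_{j}^{*}$ or is a long spanning interval already discarded in the localization step; in particular the good local intervals for any other signal $I_{j'}^{*}$, of length $\asymp|I_{j'}^{*}|\ll d_{min}$, are disjoint from $\hat{I}_{i}$ and survive into $\mathcal{I}_{app}^{i+1}$. Conversely, the residual intervals still overlapping $I_{j}^{*}$ after deletion inherit only the uncovered part of $I_{j}^{*}$, of size $O(\delta_{n}|I_{j}^{*}|)$ because $D(\hat{I}_{i},I_{j}^{*})\le\delta_{n}$, so their penalized values fall below $\gamma_{n}(\alpha)$ and the same signal is never re-detected. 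By induction the first $K$ iterations recover the $K$ signals one apiece, which gives \eqref{eq: Consistency of multiple signals}.

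Finally, for \eqref{eq: Type I of multiple signals} and \eqref{eq: Expect of multiple signals}: the null bound is immediate because the first look uses the full $\mathcal{I}_{app}$, so $P_{H_{0}}(\hat{I}\ne\emptyset)=P_{H_{0}}(\max>\gamma_{n}(\alpha))\le\alpha$. Once all true signals are removed, the residual statistic over $\mathcal{I}_{app}^{K+1}\subseteq\mathcal{I}_{app}$ is dominated by the null field, so each further look produces a false detection with conditional probability at most $\alpha$; writing $E_{H_{1}}(\hat{K})=K+E_{H_{1}}(\#\{\text{false detections}\})$ and bounding $P(\#\{\text{false}\}\ge m)\le\alpha^{m}$ by iterating this conditional estimate yields $E_{H_{1}}(\hat{K})\le K+\frac{\alpha}{1-\alpha}+o(1)$ with $C(\alpha)=\alpha/(1-\alpha)\to0$, where the $o(1)$ absorbs the (polynomially small) probability of the localization event failing, against which $\hat{K}\le n$. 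I expect the main obstacle to be the joint argument behind the localization and deletion steps: ensuring uniformly over all $O(\log^{p}n)$ iterations that the greedy deletions never erase the intervals needed to find a yet-unidentified signal, never leave enough mass to re-detect a found one, and that no spanning interval ever wins the maximization. This is precisely where the interplay of the separation hypothesis $d_{min}\gg\max_{j}|I_{j}^{*}|\log n$, the cardinality bound $K=O(\log^{p}n)$, and the $\sqrt{\log\log n}$ margin in $b_{n,j}$ must be balanced carefully.
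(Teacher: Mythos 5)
Your high-level ingredients match the paper's in several places: the approximating intervals $\tilde{I}_j$ from Lemma \ref{lem:Approximation_One_dim}, the recognition that $b_{n,j}/\sqrt{\log\log n}\to\infty$ is what pays for a union bound against $K=O(\log^{p}n)$, and your endgame for \eqref{eq: Type I of multiple signals} and \eqref{eq: Expect of multiple signals} (geometric control of false detections, $C(\alpha)=\alpha/(1-\alpha)$, with $\mathcal{I}_{left}$-type leftovers absorbed into the $o(1)$) is essentially identical to the paper's. But there is a genuine gap in your core argument for \eqref{eq: Consistency of multiple signals}. Your iteration-by-iteration induction with ``Theorem \ref{thm: Main theorem} applied locally'' is not a proof as it stands: after the data-dependent deletion $\mathcal{I}_{app}^{i+1}=\mathcal{I}_{app}^{i}\setminus\{I:I\cap\hat{I}_{i}\ne\emptyset\}$, the surviving field is conditioned on the earlier argmax events, so Theorem \ref{thm: Main theorem} cannot be cited as a black box; moreover that theorem only yields an $o(1)$ failure probability under $b_n\to\infty$, whereas your union bound over polylogarithmically many iterations needs a per-iteration failure probability of order $\log^{-\eta}n$ for every $\eta>0$. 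The paper dissolves both difficulties with a device you are missing: it defines the single unconditional event $E_{1}=\{\max(\max_{I:\min_{j}D(I,I_{j}^{*})>\delta_{n}}\boldsymbol{X}(I),\gamma_{n}(\alpha))<\min_{j}\boldsymbol{X}(\tilde{I}_{j})\}$ and observes that deletion makes the left side non-increasing while dropping found signals makes the right side non-decreasing, so $E_{1}\subseteq E_{2}\subseteq\cdots$ holds pathwise and only $P(E_{1}^{c})$ ever needs bounding --- no conditioning and no union over iterations at all. The remaining union over $j=1,\ldots,K$ and over the $O(\log^{p+2}n)$-size set $K_{near}$ is then beaten by quantitatively re-deriving the single-signal estimates at rate $\log^{-\eta}n$, using the $\sqrt{\log\log n}$ margin and the dedicated Lemma \ref{lem: Tail_Of_L} to handle the dependence between $\boldsymbol{Z}(\tilde{I}_{j})$ and the maximum; your sketch never confronts this dependence.

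Second, your ``localization'' step --- that no spanning interval ever wins --- is asserted, not proved, and you yourself flag it as the main obstacle. An interval $I$ straddling several signals has mean $\sum_{j}\mu_{I_{j}^{*}}\sqrt{|I_{j}^{*}|}(1-D(I,I_{j}^{*}))$, a sum of many small contributions; your proposed Cauchy--Schwarz bound on the total mass is never carried through against $\min_{j}\boldsymbol{X}(\tilde{I}_{j})$, and care is required because the $\mu_{I_{j}^{*}}$ are only bounded below (any domination argument must be monotone in the $\mu$'s), and because for large $p$ an aggregate of $\log^{p}n$ contributions each of size $O(1/\sqrt{\log n})$ relative to a single signal is not obviously negligible. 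The paper instead routes every interval far from all signals through the $K_{1}\setminus K_{near}$ bound using only $1-D(I,I_{j}^{*})\le1/\sqrt{\log n}$ together with an explicit chain of inequalities; you would need an estimate of that kind before your induction can even start. Finally, your separate ``never re-detected within the first $K$ iterations'' lemma is unnecessary under the paper's scheme: if $D(\hat{I}_{1},I_{j_{1}}^{*})\le\delta_{n}$, then any interval with $D(I,I_{j_{1}}^{*})\le\delta_{n}$ shares more than half of $I_{j_{1}}^{*}$ with $\hat{I}_{1}$ and is therefore deleted, which is exactly what makes the index set of $E_{2}$ a subset of that of $E_{1}$; re-detection only needs separate treatment after $K$ iterations, where the paper's $\mathcal{I}_{left}$ argument takes over.
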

The main difference here compare to Theorem \ref{thm: Main theorem}
is we require $b_{n,j}/\sqrt{\log\log n}\rightarrow\infty$ rather
than $b_{n,j}\rightarrow\infty$. This stronger requirement will ensure
that the probability of making mistakes for every iteration is small
enough so that the identification procedure is still consistent after
all iterations. The reason we assume that the minimum spacing between the intervals 
$d_{min}\gg\max_{j=1}^{K}|I_{j}^{*}|\log n$ is that if two intervals are very close, it is difficult to identify the exact location of the signal under the influence of another signal. For example, with non-diminishing probability that two intervals both with length $|I^*|$ and distance $|I^*|$ will be identified as one signal with length about $3|I^*|$. This is different from the detection problem where no such requirement is needed.

As a special case, if all signals are of the same length $|I_{1}^{*}|$,
an explicit lower bound can be given.
\begin{thm}
\label{thm:(Lower-bound)-Multi} Assume Model \eqref{eq:model-multiple signal},
and $K=O(\log^{p}n)$, if all signals are of the same length $|I_{1}^{*}|$,
then no procedure is consistent if $\mu_{I_{j}^{*}}<\sqrt{2\log\frac{en}{|I_{1}^{*}|}}/\sqrt{|I_{1}^{*}|}$.
Thus our procedure $\mathcal{P}_{n,multi}$ is in fact optimal under
the above assumptions.
\end{thm}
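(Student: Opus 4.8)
The plan is to prove this lower bound by reducing the multiple-signal identification problem to the corresponding single-signal \emph{detection} problem, mirroring the optimality argument already used after Theorem \ref{thm: Main theorem}. The key observation is that the class considered here (all signals of common length $w := |I_1^*|$, with $K = O(\log^p n)$) contains as a sub-case every configuration with a single signal, since $K = 1 = O(\log^p n)$. Hence any procedure $\mathcal{P}$ that is consistent for the whole class is in particular consistent on single-signal configurations, and it is there that I will force a contradiction when $\mu < \sqrt{2\log(en/w)}/\sqrt{w}$. Only properties \eqref{eq: Type I of multiple signals} and \eqref{eq: Consistency of multiple signals} of the consistency definition will be used.

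First I would turn a consistent identification procedure into a consistent detection test. Given $\mathcal{P}$, set $\phi = \mathbf{1}\{\hat I \ne \emptyset\}$. Property \eqref{eq: Type I of multiple signals} gives $P_{H_0}(\phi = 1) \le \alpha$, so the level is controlled. For the power, note that when $\hat K = 0$ the inner minimum in \eqref{eq: Consistency of multiple signals} is over an empty index set and hence equals $+\infty$; since $\delta_n < 1$, the event $\{\hat I = \emptyset\}$ is contained in $\{\max_j \min_i D(\hat I_i, I_j^*) > \delta_n\}$. Thus consistency forces $P_{H_1}(\hat I = \emptyset) \to 0$, i.e.\ $\phi$ has power tending to one. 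In short, single-signal identification consistency implies single-signal detection consistency.

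Next I would invoke the sharp detection lower bound. By the single-signal necessary condition of \cite{chan2011detection} quoted after Theorem \ref{thm: Main theorem}, consistent detection of a width-$w$ signal at unknown location requires $\mu\sqrt{w} - \sqrt{2\log(en/w)} \to +\infty$, so no level-$\alpha$ test is consistent once $\mu < \sqrt{2\log(en/w)}/\sqrt{w}$. To make the reduction airtight across the growing set of locations, I would run the second-moment (Bayes mixture) argument behind that condition against the prior $\pi_n$ that drops the single signal into a uniformly random one of the $N_n \asymp n/w$ disjoint candidate windows: this yields the stronger conclusion that the averaged type II error of any level-$\alpha$ test stays bounded below by some fixed $c > 0$, so $\mathbb{E}_{\pi_n} P(\hat I = \emptyset) \ge c$. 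Consequently, for each $n$ there is a location $\ell_n$ with $P(\hat I = \emptyset \mid \ell_n) \ge c$, and along the sequence of configurations placing a single signal at $\ell_n$ the left-hand side of \eqref{eq: Consistency of multiple signals} is at least $c$ and does not tend to $0$, contradicting consistency. Since the matching upper bound is supplied by Theorem \ref{thm: More than one signal}, optimality of $\mathcal{P}_{n,multi}$ follows.

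The step I expect to be the main obstacle is the averaging in the third paragraph. Consistency only asserts pointwise (per-sequence) control of the type II quantity, whereas the detection bound naturally controls the error averaged over the $\asymp n/w$ locations at each fixed $n$; reconciling these requires the Bayes-prior formulation, so that \emph{failure on average} can be upgraded to \emph{failure along some sequence of configurations}, which is precisely the negation of consistency. The sharp constant $\sqrt{2}$ itself is inherited from \cite{chan2011detection} and need not be re-derived. A minor point to verify is that the $O(\log^p n)$ cap on $K$ and the equal-length assumption are used only to guarantee that the single-signal sub-case lies in the problem class and that the boundary is the single clean threshold $\sqrt{2\log(en/w)}/\sqrt{w}$ rather than a signal-dependent curve.
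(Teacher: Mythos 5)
Your proposal is correct, but it takes a genuinely different route from the paper's own proof, which never passes through detection. The paper restricts to signals aligned with a grid of width $|I_{1}^{*}|$ (citing \cite{jeng2010optimal} for the fact that this sub-problem suffices for a lower bound), aggregates each block into $R_{i}=r_{i}+Z_{i}^{'}$ for $i=0,\ldots,n^{'}-1$ with $n^{'}=n/|I_{1}^{*}|$, and thereby reduces the problem to identifying the $K$ nonzero means $r_{i}\le\sqrt{2\log en^{'}}$ in a sparse Gaussian sequence; since $\log K=O(\log\log n)=o(\log n)$, Lemma 1 of \cite{jeng2010optimal} (an identification-specific lower bound; see also \cite{genovese2009revisiting}) yields inconsistency. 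That argument works for every admissible $K$ simultaneously, even if $K$ were known and large, exactly matching the multi-signal setting of the theorem. You instead exploit that $K=1$ lies in the class, convert any consistent identifier into a consistent level-$\alpha$ detector via $\phi=\mathbf{1}\{\hat{I}\ne\emptyset\}$ (your empty-set/empty-minimum step, and your upgrade from failure on average under the uniform-location prior to failure along a specific sequence of configurations, are both handled correctly and are exactly the right technical fixes), and then invoke the detection necessity from \cite{chan2011detection} that the paper itself quotes after Theorem \ref{thm: Main theorem}. What your route buys is economy and self-containment: it formalizes the ``identification is at least as hard as detection'' heuristic using only results already on record, with no external sparse-means lemma. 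What it gives up is scope: your failing configuration has a single signal, so your argument proves the theorem only under the (natural, and sufficient for the optimality claim about $\mathcal{P}_{n,multi}$) reading that consistency is demanded uniformly over all $K=O(\log^{p}n)$ including $K=1$; it says nothing about the stronger fact delivered by the paper's grid-aggregation proof, namely that identification remains impossible below the threshold even when the true $K$ is fixed or growing and known --- a regime where detection genuinely becomes easier, so a detection reduction could not be sharp there in any case.
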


When $\mu_{I_{j}^{*}}<\sqrt{2\log\frac{en}{|I_{1}^{*}|}}/\sqrt{|I_{1}^{*}|}$, then an algorithm would either fail to discover all signals and/or will include too may false discoveries, see Lemma 1 in \cite{jeng2010optimal} and more details in Theorem 5 in \cite{genovese2009revisiting}.

\begin{rem}
If the true number of signals $K$ is known, we can stop the procedure
$\mathcal{P}_{n,multi}$ after $K$ iterations. It can be shown that
this modified procedure is still consistent and its estimated set does
not contain any false intervals.
\end{rem}

\section{Simulation Study\label{sec:Simulation-Study}}

We have shown that by adopting the correct penalty term, the procedure
based on the penalized scan can be much more powerful, which is the
major difference between our procedure and the LRS procedure in \cite{jeng2010optimal}.
It would be difficult to compare these two procedures directly since
LRS needs to specify additional parameter $L$ and its optimality is established under some additional assumptions. It is also unclear how LRS works in high-dimension. 
So in this section, we will illustrate the
previous results with a simulation study for Model \eqref{eq:model} by comparing the performance of
the identification procedure $\mathcal{P}_{n}$ defined in Section
\ref{sec:Block-identification-simple} and the identification procedure
without the penalty term, denote by $\mathcal{P}_{n}^{unpen}$. To
be comparable, $\mathcal{P}_{n}^{unpen}$ is defined exactly the same
on the same approximation set except we set the penalty term to 
0 instead of $\sqrt{2\log\frac{en}{|I|}}$. It can be shown that the
identification boundary for $\mathcal{P}_{n}^{unpen}$ is $\sqrt{2\log n}/\sqrt{|I_{n}^{*}|}$, which is the same as LRS.
If $|I_{n}^{*}|=n^{1-p}$, $p\in(0,1)$, this detection boundary is
$p^{-1/2}$ times larger than the optimal boundary. 

Denote the (Hamming) similarity between interval $I_1$ and $I_2$ by $1-D(I_1,I_2)=\frac{|I_1 \cap I_2|}{\sqrt{|I_1||I_2|}}$
and $\gamma_{n}(\alpha)$ as in Section \ref{sec:Block-identification-simple}.
For a particular simulation, if $\max_{I\in\mathcal{I}_{app}}(\boldsymbol{Y}(I)-\sqrt{2\log\frac{en}{|I|}})>\gamma_{n}(\alpha)$,
the similarity is calculated as $\frac{|\hat{I}_n \cap I^{*}_n|}{\sqrt{|\hat{I}_n||I^{*}_n|}}$,
where $\hat{I}_n$ is defined in \eqref{eq: Null-lemma}; otherwise
the similarity is set to be 0. We do the same for $\mathcal{P}_{n}^{unpen}$,
except that $\gamma_{n}(\alpha)$ is replaced by $\tau_{n}(\alpha)$,
which is defined as the $(1-\alpha)$ quantile of $\max_{I\in\mathcal{I}_{app}}\boldsymbol{Y}(I)$.
In all of our simulations for univariate setting, we choose $n=10000$.

For the first simulation, we give the similarity for different choices
of $\mu$ range from 1.5 to 5 with a step of 0.5 and signal length
$|I_{n}^{*}|=100$ and $|I_{n}^{*}|=1000$, respectively. The result
is given in Figure \ref{fig: Similarity vs mu}. From Figure \ref{fig: Similarity vs mu},
we can see that when $|I_{n}^{*}|=1000$, which is relatively large,
$\mathcal{P}_{n}$ performs much better than $\mathcal{P}_{n}^{unpen}$,
while when $|I_{n}^{*}|=100$, which is relatively small, the performance
of $\mathcal{P}_{n}$ is only slightly better. 

\begin{figure}
\caption{\label{fig: Similarity vs mu}Simulated similarities for different
$\mu$. The left sub-graph gives the result for $|I_{n}^{*}|=1000$
and the right sub-graph gives the result for $|I_{n}^{*}|=100$. In
both graphs, the black solid line denotes the penalized procedure
$\mathcal{P}_{n}$ and the right dashed line denotes the unpenalized
procedure $\mathcal{P}_{n}^{unpen}$. The $x$-axis is $\mu \sqrt{|I^*|}$ and $y$-axis is similarity.}

\begin{centering}
\includegraphics[scale=0.30]{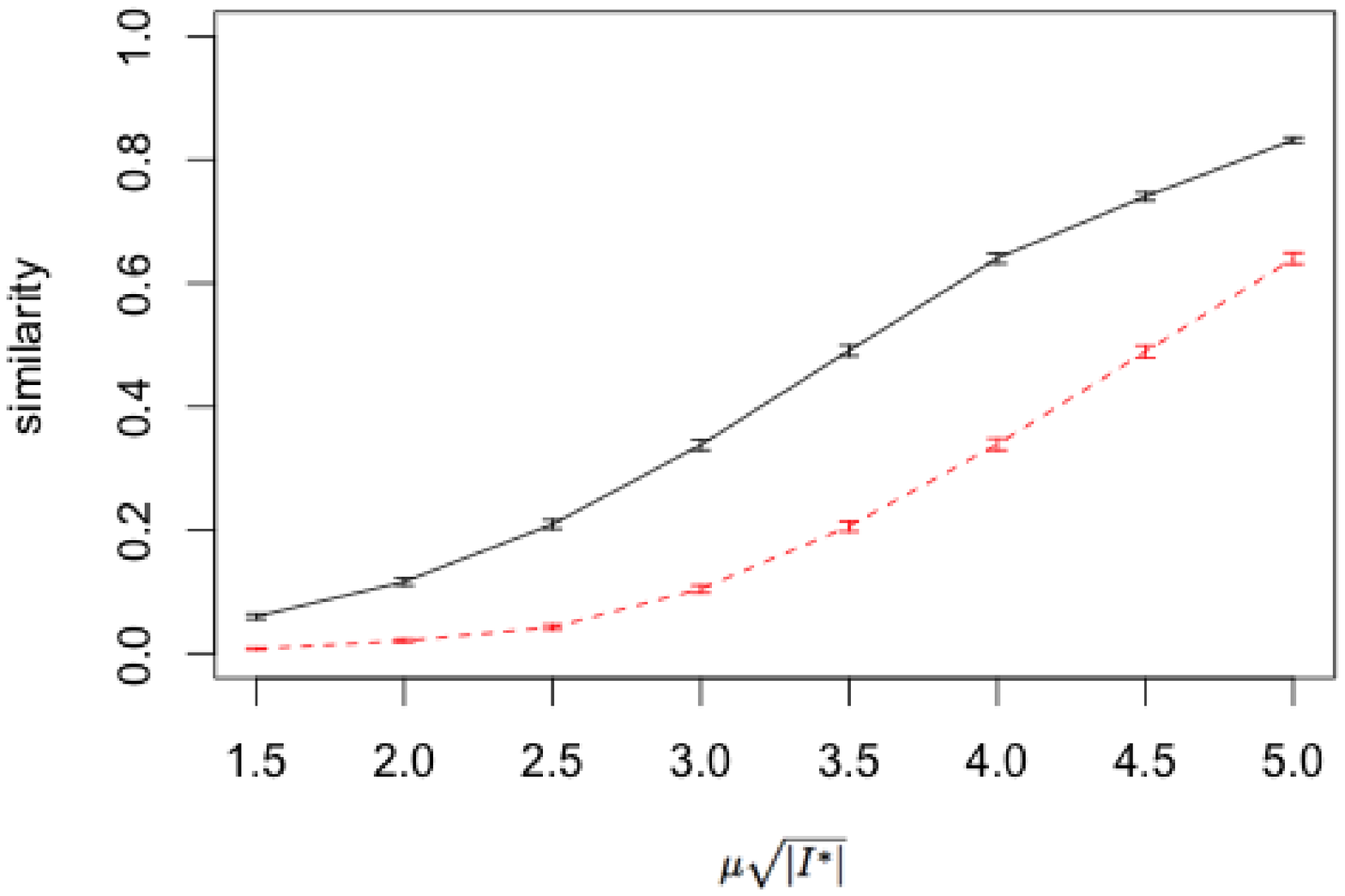}\includegraphics[scale=0.30]{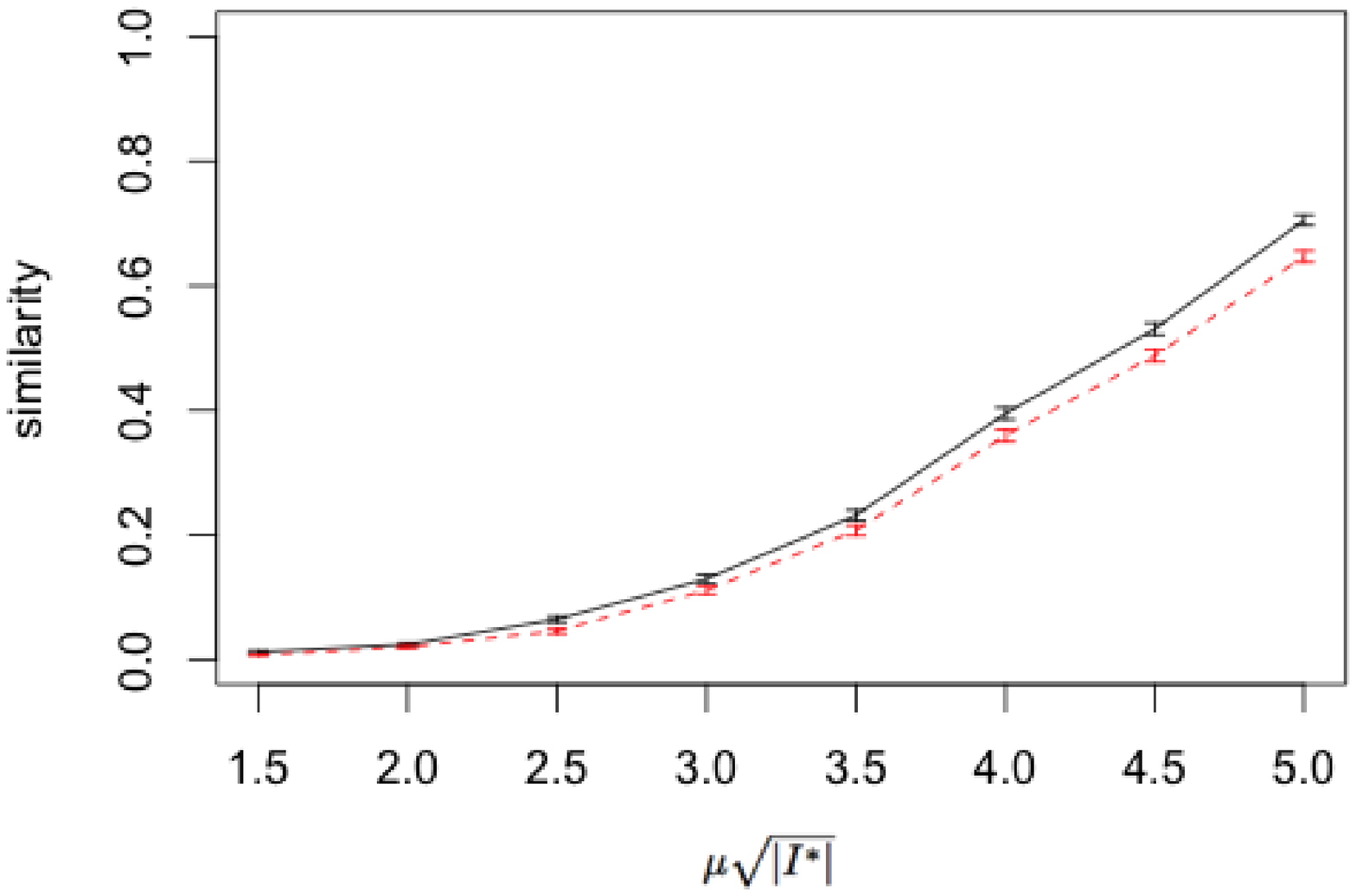}
\par\end{centering}
\end{figure}

For the second simulation, we give the similarities for different
choices of the ratio $n/|I_{n}^{*}|$. For each choice of $n/|I_{n}^{*}|$,
we choose $\mu  \sqrt{|I^*|} =1.2*\sqrt{2\log\frac{en}{|I_{n}^{*}|}}+0.1$. The simulation
result is shown in Figure \ref{fig:Similarity for different ratio}.
We can see that the $\mathcal{P}_{n}^{unpen}$ seems to be powerless
when $n/|I_{n}^{*}|$ is small ($|I_{n}^{*}|$ is large). However,
the gap between two procedures becomes smaller as the ratio $n/|I_{n}^{*}|$
increase.

\begin{figure}
\caption{\label{fig:Similarity for different ratio}Simulated similarities
for penalized procedure (black solid line) and unpenalized procedure
(red dashed line) under different ratios $n/|I_{n}^{*}|$ and $\mu \sqrt{|I^*|} =1.2*\sqrt{2\log\frac{en}{|I_{n}^{*}|}}+0.1$.
The $x$-axis is the ratio in log-scale and $y$-axis is the similarity.}

\begin{centering}
\includegraphics[scale=0.4]{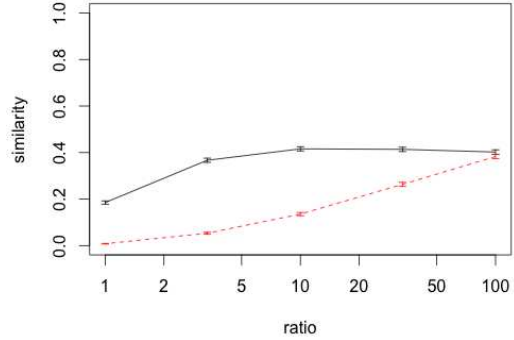}
\par\end{centering}

\end{figure}

We also compared the performance for Model  \eqref{eq:model-multi-dimension} in the two-dimensional case between our procedure $\mathcal{P}_n^{\{2\}}$ defined in Section \ref{sec:Multi-dimensional-block-signal} and $\mathcal{P}_n^{\{2\},unpen}$, which is defined exactly the same on the same approximation set except the penalty term is set to 0.  The simulation
result is shown in Figure \ref{fig:Similarity for two dimension}.  In this simulation, we give the
similarity for different choices of $\mu$ range from 2.5 to 6 with a step of 0.5. We choose $n=100$, so our space is a $100 \times 100$ rectangle. The left sub-graph gives the result for rectangular signal with width 30 and height 40 while the right sub-graph gives the result for rectangular signal with width 15 and height 80. Note that the area of the rectangle $|I^*_n| = 1200$ in both cases but the aspect ratios are different. We can see that $\mathcal{P}_n^{\{2\}}$ performs much better than $\mathcal{P}_n^{\{2\},unpen}$ in both cases and the performance is robust with respect to different aspect ratios.

\begin{figure}
\caption{\label{fig:Similarity for two dimension} Simulated similarities for different
$\mu$ in a $100 \times 100$ rectangle. The left sub-graph gives the result for $|I_{n}^{*}|=30 \times 40$
and the right sub-graph gives the result for $|I_{n}^{*}|=15 \times 60$. In
both graphs, the black solid line denotes the penalized procedure
$\mathcal{P}_{n}^{\{2\}}$ and the right dashed line denotes the unpenalized
procedure $\mathcal{P}_n^{\{2\},unpen}$. The $x$-axis is $\mu \sqrt{|I^*|}$ and $y$-axis is similarity.}
\begin{centering}
\includegraphics[scale=0.33]{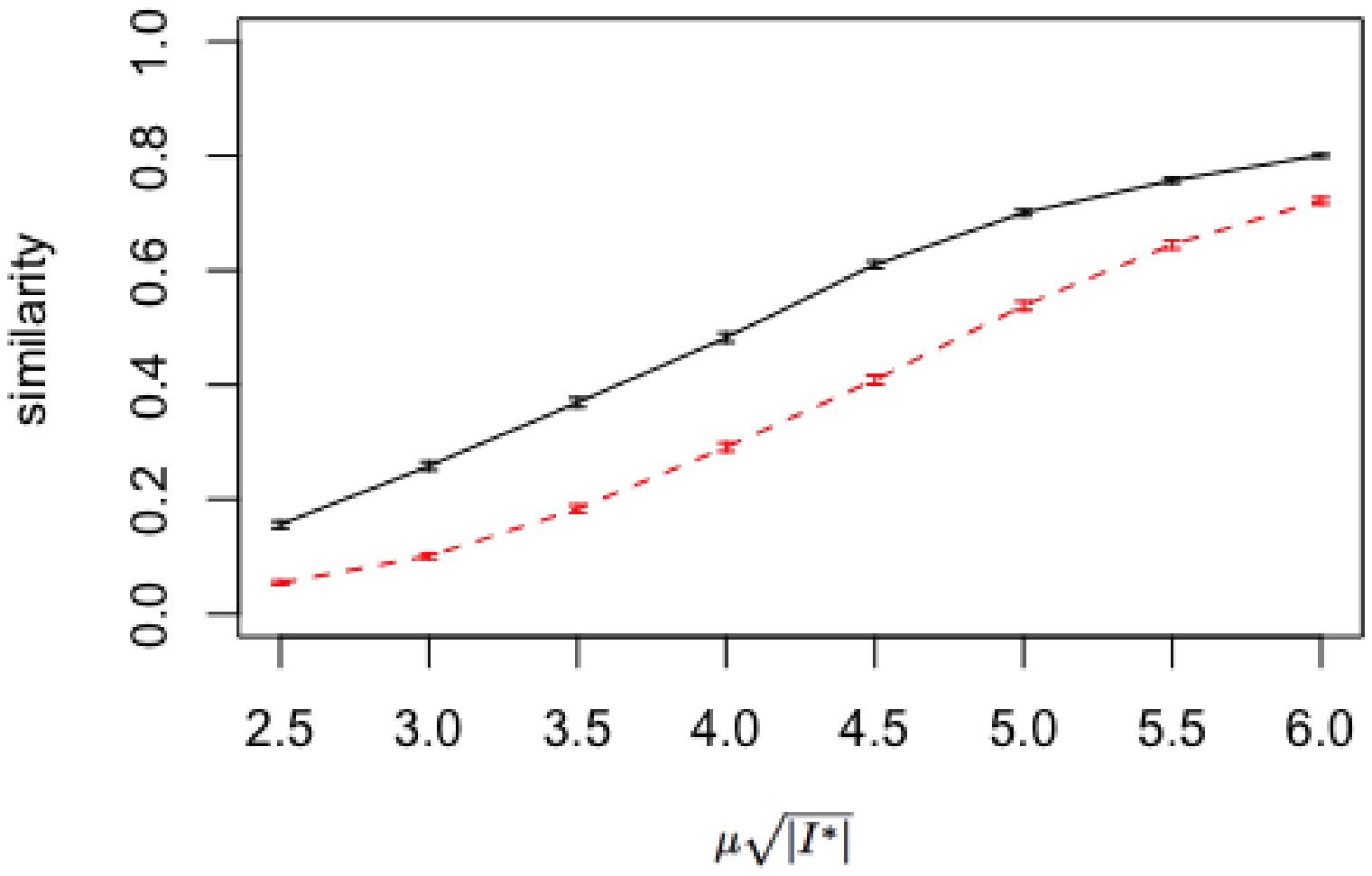}\includegraphics[scale=0.33]{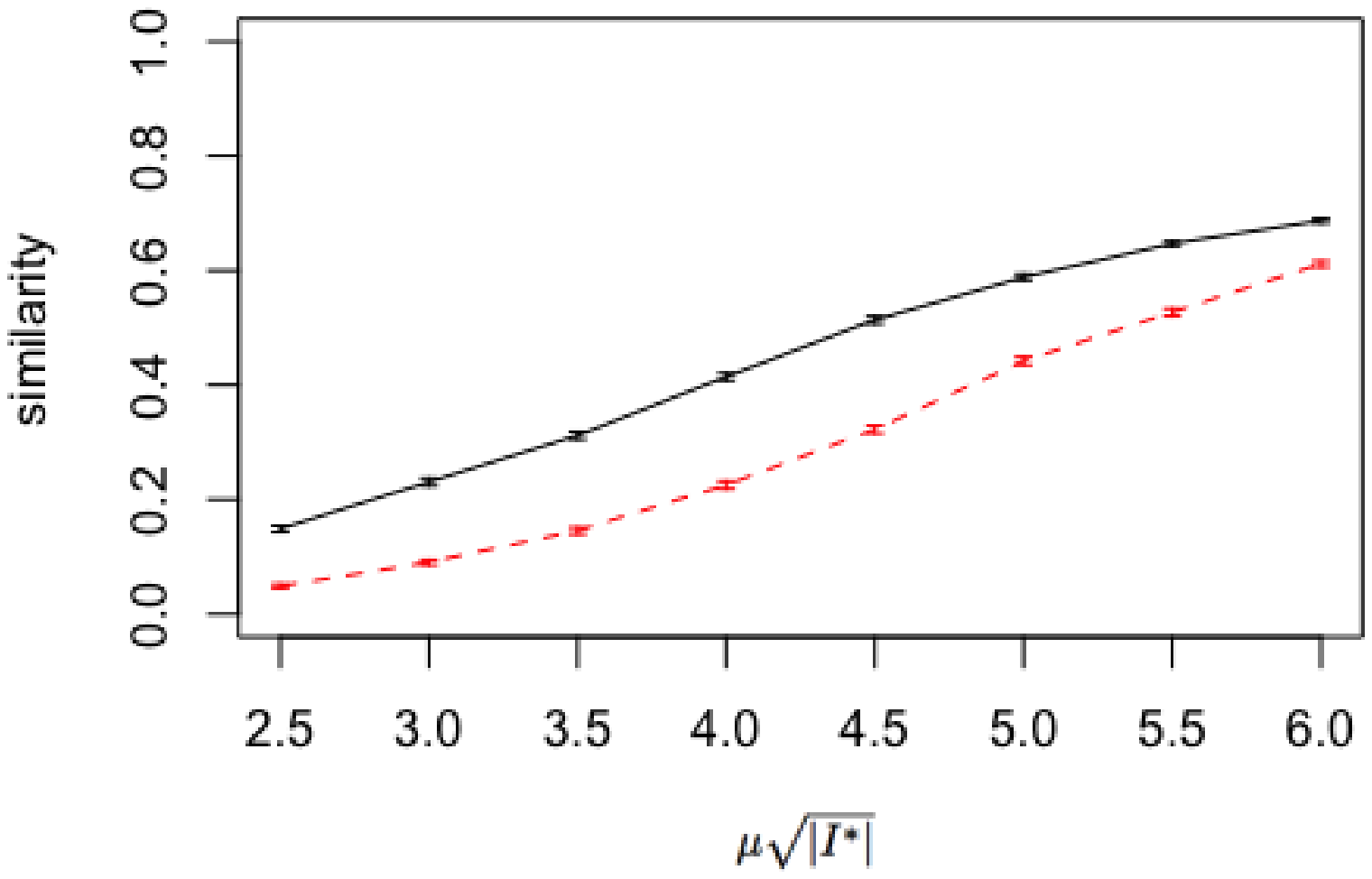}
\par\end{centering}
\end{figure}

All similarities in Figure \ref{fig: Similarity vs mu}, \ref{fig:Similarity for different ratio} and \ref{fig:Similarity for two dimension} are with respect to a 5\% significance level. 
The critical values were simulated with
10000 Monte Carlo samples, and the similarities were simulated with 2000
Monte Carlo samples. The location of the signal was sampled at random
in each of these simulations to avoid confounding with the approximation scheme.

\section{Extensions and Discussion\label{sec:Discussion}}

\subsection{Identification under the exponential family}

Now we consider the block signal identification problem under the exponential family
setting. If instead of Gaussian noise model, we observe independent
random variables $Y_{i}$, $i=1,\ldots,n$ through the one-dimensional
exponential family model $Y_{i}\sim F_{\nu_{i}}$, $i=1,\ldots,n$,
where 
\[
\nu_{i}=a+b\mathbf{1}_{I_{n}^{*}}(i)
\]
with baseline $a$ known, signal strength $b$ unknown, and unknown
interval $I_{n}^{*}$ defined the same as in Model \eqref{eq:model}.
The task is to recover the support of $I_{n}^{*}$. If $F$ is the
standard Gaussian distribution with $a=0$, we get back to our Gaussian
noise Model \eqref{eq:model}. Notice that if $|I|$ is sufficiently
large, then under $H_{0}$, $\boldsymbol{Y}(I)$ would be approximately
normally distributed, which suggest the optimality results in Section
\ref{sec:Block-identification-simple} would still hold provided $|I_{n}^{*}|$
is large enough. Formally, if $|I_{n}^{*}|\ge\log^{3+\delta}n$ for
some $\delta>0$ and denote $\mathcal{I}_{app}^{exp}=\mathcal{I}_{app}\text{\ensuremath{\cap}}\{I:|I|\ge\log^{3+\delta}n\}$,
then our identification procedure in Section \ref{sec:Block-identification-simple}
is consistent by replacing $\mathcal{I}_{app}$ with $\mathcal{I}_{app}^{exp}$.
For some similar arguments, see \cite{frick2014multiscale} and \cite{arias2011detection}. Alternatively, one can use $\sqrt{2 \log \boldsymbol{T}(I)}$ instead of $\boldsymbol{Y}(I)$ where $\boldsymbol{T}(I)$ is the local likelihood ratio statistic for testing $H_0: b=0$ against $H_1: b\ne 0$ on interval $I$.

\subsection{Identification with unspecified noise distribution}

In \cite{tony2012robust}, the author consider the identification
problem with an unspecified noise distribution. The idea is to apply
the identification procedure LRS on the ``local median transformed''
data. For more details about local median transformation, we refer
the reader to \cite{brown2008robust}. It is worthwhile to note that
our procedures $\mathcal{P}_{n}$, $\mathcal{P}_{n}^{\{2\}}$, $\mathcal{P}_{n,multi}$
are also adapted to the local median transformation and would give a
near-optimal solution over a broad range of noise distribution with
a much milder assumption on the length of the signal.

\subsection{Disucssion}

It is also interesting to compare our results with other results in change-point detection settings. For most research in change-point detection area, they typically seek to find an rate optimal solution rather than an exact optimal solution, due to the more complex structure they consider, see for example \cite{frick2014multiscale} and \cite{brunel2014convex}. Our signal identification problem has a slightly easier setting and we can achieve the exact optimal constant. We have shown in the simulation that the constant actually matters and a suboptimal constant may lead to a significant loss of power.
Last but not the least, in Theorem \ref{thm:(Lower-bound)-Multi}, we assume that the number
of block signals $K=O(\log^{p}n)$ for some $p>0$. If instead we assume
$K=n^{\delta}$ for some $\delta>0$, then our procedure may not be optimal anymore. In fact, in this case, block signal identification would be statistically more difficult
than the block signal detection, as there are so many block signals. It would be interesting to develop
an optimality theory under this situation, which is left as an open
problem.

\section*{Acknowledgements}
The author thanks Professor G. Walther for supervision and some helpful suggestions. We also owe thanks to the referees for a number of corrections and suggestions.

\section*{Appendix}

\subsection*{A.1 Proofs for Section \ref{sec:Block-identification-simple}}

Before giving the proof of Theorem \ref{thm: Main theorem}, we first
introduce some useful lemmas.

The following lemma is proved in \cite{chan2011detection}, which
is a consequence of a result in \cite{dumbgen_multiscale_2001}.
\begin{lem}
\label{lem: Dumbgen}Define $\boldsymbol{Z}_{n}(I)=\frac{\sum_{i\in I}Z_{i}}{\sqrt{|I|}}$.
Let $J\in\mathcal{I}_{n}$, where $\mathcal{I}_{n}=\{(j,k],0\le j<k\le n\}$
and $J$ does not depend on $\boldsymbol{Z}_{n}$. Then 
\[
\max_{I\in\mathcal{I}_{n}:I\subset J}\left(\boldsymbol{Z}_{n}(I)-\sqrt{2\log\frac{e|J|}{|I|}}\right)\stackrel{d}{=}:\, L<\infty\quad a.s.
\]
here the random variable L defined above is universally applicable
for all $J$ and n and is finite almost surely.
\end{lem}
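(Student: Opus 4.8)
The plan is to reduce the discrete penalized scan over subintervals of a block to a scale-invariant continuous object, namely a functional of Brownian motion whose law does not depend on $J$ or $n$, and then to invoke the modulus-of-continuity estimate underlying the result of \cite{dumbgen_multiscale_2001} to obtain almost-sure finiteness.

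First I would remove the dependence on the \emph{location} of $J$. Writing $m=|J|$ and using the translation invariance of the i.i.d. sequence $(Z_i)$, the joint law of $\{\boldsymbol{Z}_n(I):I\subset J,\,I\in\mathcal{I}_n\}$ depends only on $m$, so without loss of generality $J=(0,m]$. Since the increments of a standard Brownian motion $W$ at integer times are i.i.d. $N(0,1)$, the partial sums $S_k=\sum_{i\le k}Z_i$ satisfy $(S_0,\ldots,S_m)\stackrel{d}{=}(W(0),\ldots,W(m))$; for an integer-endpoint subinterval $I=(a,b]$ this gives $\boldsymbol{Z}_n(I)\stackrel{d}{=}(W(b)-W(a))/\sqrt{b-a}$. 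Hence the discrete maximum is, in distribution, bounded above pointwise by the continuous supremum
\[
\sup_{0\le s<t\le m}\left(\frac{W(t)-W(s)}{\sqrt{t-s}}-\sqrt{2\log\frac{em}{t-s}}\right),
\]
obtained by dropping the integer restriction on the endpoints.

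Next I would eliminate the dependence on $m$ by Brownian scaling. Substituting $s=mu$, $t=mv$ and using $W(m\,\cdot)\stackrel{d}{=}\sqrt{m}\,W(\cdot)$, both the normalized increment and the penalty lose their dependence on $m$: one has $\frac{W(mv)-W(mu)}{\sqrt{m(v-u)}}\stackrel{d}{=}\frac{W(v)-W(u)}{\sqrt{v-u}}$ and $\sqrt{2\log\frac{em}{m(v-u)}}=\sqrt{2\log\frac{e}{v-u}}$. Therefore the continuous supremum is equal in law to
\[
L:=\sup_{0\le u<v\le 1}\left(\frac{W(v)-W(u)}{\sqrt{v-u}}-\sqrt{2\log\frac{e}{v-u}}\right),
\]
a single random variable whose distribution is the same for every $J$ and every $n$. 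This furnishes the asserted universal stochastic bound: $M(J)$ is dominated in distribution by $L$ for all $J$ and $n$.

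The remaining and essential step is to show $L<\infty$ almost surely, which is exactly the content borrowed from \cite{dumbgen_multiscale_2001}: it is the statement that the normalized increments of Brownian motion, corrected by the additive term $\sqrt{2\log(e/(v-u))}$, are uniformly bounded from above over all scales, a reformulation of Lévy's modulus of continuity. I expect this to be the main obstacle, since it is precisely the delicate multiscale control: as $(v-u)\to 0$ the normalized increments blow up like $\sqrt{2\log(1/(v-u))}$ and the penalty is calibrated to match this rate exactly, so finiteness is not automatic and depends on the sharp modulus estimate. Granting that estimate, the chain of comparisons shows that the discrete maximum is dominated in distribution by $L<\infty$ uniformly in $J$ and $n$, completing the proof.
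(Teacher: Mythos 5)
Your argument is correct and is essentially the standard proof of this lemma: the paper itself gives no proof but cites \cite{chan2011detection}, whose argument is exactly your chain of reductions --- translation invariance, embedding the partial sums as Brownian motion at integer times, Brownian scaling to collapse the dependence on $|J|$ and $n$ into the single universal variable $L=\sup_{0\le u<v\le 1}\bigl(\frac{W(v)-W(u)}{\sqrt{v-u}}-\sqrt{2\log\frac{e}{v-u}}\bigr)$, and finiteness of $L$ from the multiscale modulus bound of \cite{dumbgen_multiscale_2001}. Your reading of the (slightly abused) notation as stochastic domination by a universal finite $L$ is also the correct one, matching how the lemma is used later (e.g., $\max_{I\in K_{0}}\boldsymbol{X}(I)\stackrel{d}{\le}L$).
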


Throughout the proof of this paper, we ignore the rounding issues
in the definition of $d_{\ell}$ whenever this does not affect our
results. To simplify the notation, in the following proof we use $I^{*}$ rather
than $I_{n}^{*}$ to denote the true signal whenever this does not
cause confusion.

The following lemma shows that we can approximate every interval well
using our approximation set $\mathcal{I}_{app}$ defined in
Section \ref{sec:Block-identification-simple}.
\begin{lem}
\label{lem:Approximation_One_dim}For each $I^{*}$ with $|I^{*}|\ll n$,
there exists an $\tilde{I}\in\mathcal{I}_{app}$ such that $D(\tilde{I},I^{*})\le\frac{1}{3\sqrt{\log_{2}\frac{n}{|I^{*}|}}}$
and $\sqrt{2\log\frac{en}{|\tilde{I}|}}-\sqrt{2\log\frac{en}{|I^{*}|}}=o(1)$.\end{lem}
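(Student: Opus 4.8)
The plan is to treat two regimes of $|I^{*}|$ separately and, in each, to produce an approximant that either contains or is contained in $I^{*}$, so that the Hamming distance collapses to a clean one-parameter expression. First I would dispose of the small-length case: if $|I^{*}|\le m_{\ell_{max}}$ then $I^{*}$ itself lies in $\mathcal{I}_{small}$, so I take $\tilde I=I^{*}$ and both claims hold trivially with $D=0$ and zero penalty difference. So assume $|I^{*}|>m_{\ell_{max}}$. Since the windows $(m_\ell,2m_\ell]$ partition $(m_{\ell_{max}},n]$, there is a unique $\ell\in\{1,\dots,\ell_{max}\}$ with $m_\ell<|I^{*}|\le 2m_\ell$. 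From $m_\ell<|I^{*}|$ I get $2^{\ell}>n/|I^{*}|$, i.e. $\ell>\log_2\frac{n}{|I^{*}|}$, the inequality I will exploit at the very end; note also that $|I^{*}|\ll n$ forces $\ell\to\infty$, and that $d_\ell/m_\ell\le\frac{1}{6\sqrt\ell}+\frac{1}{m_\ell}=\frac{1}{6\sqrt\ell}(1+o(1))$ because $m_\ell\ge\log n\gg\sqrt\ell$.

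Next I would build $\tilde I$ by containment rounding onto the level-$\ell$ grid: set $\tilde j=d_\ell\lfloor j^{*}/d_\ell\rfloor$ and $\tilde k=d_\ell\lceil k^{*}/d_\ell\rceil$, so that $I^{*}\subseteq(\tilde j,\tilde k]$ and the length $b:=\tilde k-\tilde j$ obeys $|I^{*}|\le b<|I^{*}|+2d_\ell$. The lower window constraint $b>m_\ell$ is automatic since $b\ge|I^{*}|>m_\ell$, so if in addition $b\le 2m_\ell$ I take $\tilde I=(\tilde j,\tilde k]\in\mathcal{I}_{app}(\ell)$, which contains $I^{*}$. The only obstruction is the upper window boundary: when $|I^{*}|$ is within $2d_\ell$ of $2m_\ell$ the rounding may give $b\in(2m_\ell,2m_\ell+2d_\ell)$. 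In that case I retract one grid step from each side, $\tilde I=(\tilde j+d_\ell,\tilde k-d_\ell]$; since $\tilde j+d_\ell>j^{*}$ and $\tilde k-d_\ell<k^{*}$, this interval sits inside $I^{*}$, its length $b-2d_\ell$ lies in $(m_\ell,2m_\ell]$ (using $m_\ell\gg d_\ell$), and it belongs to $\mathcal{I}_{app}(\ell)$. Either way one of $\tilde I,I^{*}$ contains the other and $\bigl||\tilde I|-|I^{*}|\bigr|\le 2d_\ell$.

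Because one interval contains the other, the intersection equals the shorter length, so $D(\tilde I,I^{*})=1-\sqrt{\min(|\tilde I|,|I^{*}|)/\max(|\tilde I|,|I^{*}|)}$. Writing the ratio as $(1+y)^{-1}$ or $1-y$ with $y\le 2d_\ell/|I^{*}|\le 2d_\ell/m_\ell$, the elementary bounds $1-(1+y)^{-1/2}\le y/2$ and $1-\sqrt{1-y}\le y/(1+\sqrt{1-y})$ give $D(\tilde I,I^{*})\le\frac{d_\ell}{m_\ell}(1+o(1))\le\frac{1}{6\sqrt\ell}(1+o(1))$. The final step is the comparison of constants: since $\ell>\log_2\frac{n}{|I^{*}|}$ we have $\frac{1}{3\sqrt{\log_2(n/|I^{*}|)}}>\frac{1}{3\sqrt\ell}=\frac{2}{6\sqrt\ell}$, and this factor-of-two gap absorbs the $(1+o(1))$, yielding $D(\tilde I,I^{*})\le\frac{1}{3\sqrt{\log_2(n/|I^{*}|)}}$. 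Here the constant $6$ in the definition of $d_\ell$ is exactly what creates the margin.

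The penalty comparison then follows from the mean value theorem applied to $g(x)=\sqrt{2\log\frac{en}{x}}$, whose derivative is $-\bigl(x\sqrt{2\log\frac{en}{x}}\bigr)^{-1}$. Evaluating at a point $\xi$ between $|\tilde I|$ and $|I^{*}|$ (both of order $m_\ell$, with $\log\frac{en}{\xi}$ of order $\ell$) and multiplying by $\bigl||\tilde I|-|I^{*}|\bigr|\le 2d_\ell=O(m_\ell/\sqrt\ell)$ gives $|g(|\tilde I|)-g(|I^{*}|)|=O(1/\ell)=o(1)$, since $\ell\to\infty$. I expect the genuine difficulty to be precisely the upper-window-boundary case: containment rounding is what makes $D$ collapse to the clean one-parameter form, yet it can violate the length constraint, and the retraction fix must be shown to keep $\tilde I$ inside both $I^{*}$ and the window while holding the perturbation at $O(d_\ell)$ rather than letting it grow — it is exactly this control that pins the leading constant at $1/6<1/3$.
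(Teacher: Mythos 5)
Your proof is correct and takes essentially the same approach as the paper's: both split into the case $|I^{*}|\le m_{\ell_{max}}$ (where $\tilde I=I^{*}$ works trivially) versus the dyadic window $m_{\ell}<|I^{*}|\le 2m_{\ell}$, round the endpoints onto the grid of multiples of $d_{\ell}$ so that one of $\tilde I$, $I^{*}$ contains the other with length error at most $2d_{\ell}$, and then deduce both claims from $\ell>\log_{2}\frac{n}{|I^{*}|}$ and $d_{\ell}/m_{\ell}\approx\frac{1}{6\sqrt{\ell}}$. The only differences are mirror-image details: you round outward by default and retract one grid step at the upper window edge, whereas the paper rounds inward by default and switches to an enclosing interval near the lower edge (the case $|I^{*}|<m_{\ell^{*}}+2d_{\ell^{*}}$), and you handle the penalty gap by the mean value theorem (and the ceiling in $d_{\ell}$ explicitly) where the paper uses a direct algebraic bound and declares rounding issues ignored.
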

\begin{proof}
There are two cases: $|I^{*}|>m_{\ell_{max}}$ and $|I^{*}|\le m_{\ell_{max}}$.

Consider first when $|I^{*}| > m_{\ell_{max}}$. Let $\ell^{*}$
be the integer satisfying $m_{\ell^{*}}<|I^{*}|\le2m_{\ell^{*}}$.
By our construction, unless $|I^{*}|<m_{\ell^{*}}+2d_{\ell^{*}}$,
there exists an interval $\tilde{I}\in\mathcal{I}_{app}$, such that
$\tilde{I}\text{\ensuremath{\subset}}I^{*}$ and 
\[
|\tilde{I}|>(1-\frac{2d_{\ell^{*}}}{m_{{\ell}^{*}}})|I^{*}|=(1-\frac{1}{3\sqrt{\ell^{*}}})|I^{*}|\text{\ensuremath{\ge}}(1-\frac{1}{3\sqrt{\log_{2}\frac{n}{|I^{*}|}}})|I^{*}|.
\]
Thus 
\[
D(\tilde{I},I^{*})=1-\sqrt{\frac{|\tilde{I}|}{|I^{*}|}}\le1-\sqrt{1-\frac{1}{3\sqrt{\log_{2}\frac{n}{|I^{*}|}}}}\le\frac{1}{3\sqrt{\log_{2}\frac{n}{|I^{*}|}}},
\]
 and 
\begin{eqnarray*}
0 & \le & \sqrt{2\log\frac{en}{|\tilde{I}|}}-\sqrt{2\log\frac{en}{|I^{*}|}}\\
 & \le & \sqrt{2\log\frac{en}{|I^{*}|(1-\frac{1}{3\sqrt{\log_{2}\frac{n}{|I^{*}|}}})}}-\sqrt{2\log\frac{en}{|I^{*}|}}\\
 & \le & \frac{-\log(1-\frac{1}{3\sqrt{\log_{2}\frac{n}{|I^{*}|}}})}{\sqrt{2\log\frac{en}{|I^{*}|}}}
\end{eqnarray*}
As we assume $|I^{*}|\ll n$, then numerator goes to zero and the
denominator goes to infinity, so $\sqrt{2\log\frac{en}{|\tilde{I}|}}-\sqrt{2\log\frac{en}{|I^{*}|}}=o(1)$.\\
If $ $$|I^{*}|<m_{\ell^{*}}+2d_{\ell^{*}}$, then there exists an
interval $\tilde{I}\in\mathcal{I}_{app}$, such that $\tilde{I}\text{\ensuremath{\supset}}I^{*}$, 
\[
|\tilde{I}|<(1+\frac{2d_{\ell^{*}}}{m_{\ell}^{*}})|I^{*}|=(1+\frac{1}{3\sqrt{\ell^{*}}})|I^{*}|\text{\ensuremath{\le}}(1+\frac{1}{3\sqrt{\log_{2}\frac{n}{|I^{*}|}}})|I^{*}|.
\]
The remaining proof for bounding $D(\tilde{I},I^{*})$ and $\sqrt{2\log\frac{en}{|\tilde{I}|}}-\sqrt{2\log\frac{en}{|I^{*}|}}$
are similar to the case $\tilde{I}\subset I^{*}$.

Now consider when $|I^{*}|\le m_{\ell_{max}}$ , then $I^{*}\in\mathcal{I}_{app}$,
and thus we can simply let $\tilde{I}=I^{*}$ and the theorem holds
trivially.
\end{proof}

\subsubsection*{Proof of Theorem \ref{thm: Main theorem}:}
\begin{proof}
Denote $\boldsymbol{X}(I)=\boldsymbol{Y}(I)-\sqrt{2\log\frac{en}{|I|}}$
for interval $I$. 

By the definition of $\gamma_{n}(\alpha)$,
\[
P_{H_{0}}(I\ne\emptyset)=P_{H_{0}}(\max_{I\in\mathcal{I}_{app}}\boldsymbol{X}(I)>\gamma_{n}(\alpha))\le\alpha,
\]
which proves equation \eqref{eq:type I error}.

Now turn to prove equation \eqref{eq: type II error}. Define
\[
K_{0}=\{I\in\mathcal{I}_{app}:I\cap I^{*}=\emptyset\}
\]
and 
\[
K_{1}=\{I\in\mathcal{I}_{app}:I\cap I^{*}\neq\emptyset\text{ and }D(I,I^{*})>\delta_{n}\}.
\]
By Lemma \ref{lem:Approximation_One_dim}, there exists an $\tilde{I}\in\mathcal{I}_{app}$
such that $D(\tilde{I},I^{*})\le\frac{1}{3\sqrt{\log_{2}\frac{n}{|I^{*}|}}}$
and $\sqrt{2\log\frac{en}{|\tilde{I}|}}-\sqrt{2\log\frac{en}{|I^{*}|}}=o(1)$.
Then,
\begin{eqnarray*}
P_{H_{1}}(D(\hat{I},I^{*})>\delta_{n}) & \le & P_{H_{1}}(\max(\max_{I\in\mathcal{I}_{app,}D(I,I^{*})>\delta_{n}}\boldsymbol{X}(I),\gamma_{n}(\alpha))\ge\boldsymbol{X}(\tilde{I}))\\
 & \le & P_{H_{1}}(\max_{I\in K_{0}}\boldsymbol{X}(I)\ge\boldsymbol{X}(\tilde{I}))+P_{H_{1}}(\max_{I\in K_{1}}\boldsymbol{X}(I)\ge\boldsymbol{X}(\tilde{I}))+P_{H_{1}}(\gamma_{n}(\alpha)\ge\boldsymbol{X}(\tilde{I})).
\end{eqnarray*}
By Lemma \ref{lem: Dumbgen}, $\max_{I\in K_{0}}\boldsymbol{X}(I)\stackrel{d}{\le}L<\infty$
a.s. and $\limsup_{n \rightarrow \infty} \gamma_{n}(\alpha)<\infty$ a.s.. Also notice that under
$H_{1}$, by Lemma \ref{lem:Approximation_One_dim}, we have 
\begin{eqnarray*}
\boldsymbol{X}(\tilde{I}) & = & \boldsymbol{Y}(\tilde{I})-\sqrt{2\log\frac{en}{|\tilde{I}|}}\\
 & \ge & -|\boldsymbol{Z}(\tilde{I})|+(1-D(\tilde{I},I^{*}))(\sqrt{2\log\frac{en}{|I^{*}|}}+b_{n})-\sqrt{2\log\frac{en}{|\tilde{I}|}}\\
 & \ge & -|\boldsymbol{Z}(\tilde{I})|+(1-D(\tilde{I},I^{*}))b_{n}-(\sqrt{2\log\frac{en}{|\tilde{I}|}}-\sqrt{2\log\frac{en}{|I^{*}|}})-D(\tilde{I},I^{*})\sqrt{2\log\frac{en}{|I^{*}|}}\\
 & \ge & -|\boldsymbol{Z}(\tilde{I})|+(1-D(\tilde{I},I^{*}))b_{n}-\frac{\sqrt{2\log\frac{en}{|I^{*}|}}}{3\sqrt{\log_{2}\frac{n}{|I^{*}|}}}-o(1)\\
 & \stackrel{p}{\rightarrow} & \infty,
\end{eqnarray*}
so 
\begin{equation}
P_{H_{1}}(\max_{I\in K_{0}}\boldsymbol{X}(I)\ge\boldsymbol{X}(\tilde{I}))\rightarrow0 \label{eq:first_part_small}
\end{equation}
and 
\begin{equation}
P_{H_{1}}(\gamma_{n}(\alpha)\ge\boldsymbol{X}(\tilde{I}))\rightarrow0. \label{eq:second_part_small}
\end{equation}
Denote 
\[
K_{near}=\{I\in K_{1}:\frac{|I|}{\log n}\le|I^{*}|\le|I|\log n\}.
\]
To finish the proof, it remains to show that $P_{H_{1}}(\max_{I\in K_{1}}\boldsymbol{X}(I)\ge\boldsymbol{X}(\tilde{I}))\rightarrow0$.
First notice that
\begin{eqnarray*}
P_{H_{1}}(\max_{I\in K_{1}}X(I)\ge X(\tilde{I})) & \le & \sum_{I\in K_{near}}P_{H_{1}}(\boldsymbol{X}(I)\ge\boldsymbol{X}(\tilde{I}))+P_{H_{1}}(\max_{I\in K_{1}\backslash K_{near}}\boldsymbol{X}(I)\ge\boldsymbol{X}(\tilde{I}))\\
 & =: & (A)+(B)
\end{eqnarray*}

We need the following two lemmas to bound part (A) and (B), respectively. The proof of these lemmas is given after this theorem.

\begin{lem}
\label{lem: lemma1_for_main_thm}
\[
\sum_{I\in K_{near}}P_{H_{1}}(\boldsymbol{X}(I)\ge\boldsymbol{X}(\tilde{I})) \rightarrow 0.
\]
\end{lem}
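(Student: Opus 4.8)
The plan is to bound each summand by a Gaussian tail probability, make that bound uniform over $K_{near}$ by showing that the \emph{signal deficit} of $I$ relative to the near-oracle interval $\tilde I$ dominates both the penalty mismatch and the noise, and then control the sum by proving that $K_{near}$ contains only polylogarithmically many intervals.

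\textbf{Step 1 (reduction to a Gaussian tail).} Under $H_1$ I write $\boldsymbol Y(I)=\mu|I\cap I^{*}|/\sqrt{|I|}+\boldsymbol Z(I)$ and use the identity $\frac{|I\cap I^{*}|}{\sqrt{|I|}}=\sqrt{|I^{*}|}\,(1-D(I,I^{*}))$. Subtracting the corresponding expression for $\tilde I$ gives
\[
\boldsymbol X(I)-\boldsymbol X(\tilde I)=-S+\bigl(\boldsymbol Z(I)-\boldsymbol Z(\tilde I)\bigr)-P,
\]
with the deterministic quantities $S=\mu\sqrt{|I^{*}|}\,(D(I,I^{*})-D(\tilde I,I^{*}))$ and $P=\sqrt{2\log\frac{en}{|I|}}-\sqrt{2\log\frac{en}{|\tilde I|}}$. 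Since $\boldsymbol Z(I)-\boldsymbol Z(\tilde I)$ is centered Gaussian with variance $2D(I,\tilde I)\le 2$, the event $\{\boldsymbol X(I)\ge\boldsymbol X(\tilde I)\}$ equals $\{\boldsymbol Z(I)-\boldsymbol Z(\tilde I)\ge S+P\}$, and the standard Gaussian tail bound yields $P_{H_1}(\boldsymbol X(I)\ge\boldsymbol X(\tilde I))\le\exp(-(S+P)^{2}/4)$ whenever $S+P>0$.

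\textbf{Step 2 (the mean is large, uniformly).} Here I show $S$ dominates. Every $I\in K_{near}$ has $D(I,I^{*})>\delta_{n}$, whereas Lemma \ref{lem:Approximation_One_dim} gives $D(\tilde I,I^{*})=O(1/\sqrt{\log n})=o(\delta_{n})$; hence $D(I,I^{*})-D(\tilde I,I^{*})\ge\delta_{n}(1-o(1))$, and using $\mu\sqrt{|I^{*}|}\ge\sqrt{2\log\frac{en}{|I^{*}|}}$ I get $S\ge\sqrt{2\log\frac{en}{|I^{*}|}}\,\delta_{n}(1-o(1))\gg\sqrt{\log\log n}$. For the penalty gap, the definition of $K_{near}$ together with $|\tilde I|\approx|I^{*}|$ forces $|I|$ and $|\tilde I|$ to differ by at most a factor $(1+o(1))\log n$, so $|P|=O(\log\log n/\sqrt{\log n})$, negligible against $S$. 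Therefore $(S+P)^{2}\ge S^{2}(1-o(1))\ge 2\log\frac{en}{|I^{*}|}\,\delta_{n}^{2}(1-o(1))$, giving the uniform bound
\[
P_{H_1}(\boldsymbol X(I)\ge\boldsymbol X(\tilde I))\le\exp\!\left(-\tfrac12\log\tfrac{en}{|I^{*}|}\,\delta_{n}^{2}(1-o(1))\right)\qquad\text{for all }I\in K_{near}.
\]

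\textbf{Step 3 (counting and conclusion).} The decisive combinatorial point is that $K_{near}$ is small: each of its intervals overlaps $I^{*}$ and has $|I|$ within a factor $\log n$ of $|I^{*}|$, so only $O(\log\log n)$ dyadic scales contribute, and on scale $\ell$ the grid spacing $d_{\ell}=\Theta(m_{\ell}/\sqrt{\ell})$ restricts the number of admissible left endpoints (those placing the interval over $I^{*}$) and lengths to $O(\mathrm{polylog}(n))$. A direct count, including the contribution of $\mathcal I_{small}$ when $|I^{*}|$ is small, gives $|K_{near}|=O(\log^{3}n)$, so $\log|K_{near}|=O(\log\log n)$. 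Since $|I^{*}|\ll n^{1-\kappa}$ gives $\log\frac{en}{|I^{*}|}\ge\kappa\log n$ and the hypothesis $\delta_{n}\gg\sqrt{\log\log n}/\sqrt{\log n}$ gives $\delta_{n}^{2}\log n\gg\log\log n$, the exponent in Step 2 is at least $\tfrac12\kappa\,(\log n)\,\delta_{n}^{2}(1-o(1))$, which exceeds $2\log|K_{near}|$ for large $n$; hence
\[
\sum_{I\in K_{near}}P_{H_1}(\boldsymbol X(I)\ge\boldsymbol X(\tilde I))\le|K_{near}|\exp\!\left(-2\log|K_{near}|\right)=\frac{1}{|K_{near}|}\longrightarrow 0.
\]

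\textbf{Main obstacle.} The delicate step is the counting estimate $|K_{near}|=O(\mathrm{polylog}(n))$: it is precisely the overlap requirement $I\cap I^{*}\ne\emptyset$ together with the comparable-length constraint that collapses the $O(n\log n)$ intervals of $\mathcal I_{app}$ down to polylogarithmically many, and this is exactly what allows the threshold $\delta_{n}\gg\sqrt{\log\log n}/\sqrt{\log n}$ (rather than a larger one) to close the argument. The secondary point requiring care is verifying that $S$ dominates $P$ \emph{uniformly} across the whole range $D(I,I^{*})\in(\delta_{n},1]$, which is what lets one apply a single exponential bound to the entire sum.
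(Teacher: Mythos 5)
Your proof is correct and follows essentially the same route as the paper's: the same Gaussian reduction with mean $-\mu\sqrt{|I^{*}|}\,(D(I,I^{*})-D(\tilde I,I^{*}))$ and variance $2D(I,\tilde I)\le 2$, the same uniform lower bound $\gg\sqrt{\log\log n}$ on the mean via $\delta_n\gg\sqrt{\log\log n}/\sqrt{\log n}$, the same $o(1)$ control of the penalty gap using the $K_{near}$ length constraint and $|I^{*}|\ll n^{1-\kappa}$, and a polylogarithmic count of $K_{near}$ (your $O(\log^{3}n)$ versus the paper's asserted $O(\log^{2}n)$, either of which suffices since each summand is $\le\log^{-\eta}n$ for every $\eta>0$). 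The only cosmetic slip is the final display's ``$=1/|K_{near}|\rightarrow 0$,'' which presumes $|K_{near}|\rightarrow\infty$; the bound $|K_{near}|\exp(-E_n)$ with $E_n\gg\log\log n$ already tends to zero regardless.
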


\begin{lem}
\label{lem: lemma2_for_main_thm}
\[
P_{H_{1}}(\max_{I\in K_{1}\backslash K_{near}}\boldsymbol{X}(I)\ge\boldsymbol{X}(\tilde{I})) \rightarrow 0.
\]
\end{lem}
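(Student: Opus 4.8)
The plan is to bound the maximum appearing in part $(B)$, namely $\max_{I\in K_1\backslash K_{near}}\boldsymbol{X}(I)$, from above by a quantity that is strictly dominated by the lower bound for $\boldsymbol{X}(\tilde{I})$ already obtained in the proof of Theorem \ref{thm: Main theorem}, and then show the gap diverges in probability. First I would split $\boldsymbol{X}(I)=\frac{\mu|I\cap I^{*}|}{\sqrt{|I|}}+\left(\boldsymbol{Z}(I)-\sqrt{2\log\frac{en}{|I|}}\right)$ into a deterministic signal part and a penalized-noise part. The point of removing $K_{near}$ is that every $I\in K_1\backslash K_{near}$ has a scale mismatch with $I^{*}$: either $|I|>|I^{*}|\log n$, in which case $|I\cap I^{*}|\le|I^{*}|$ forces $\frac{|I\cap I^{*}|}{\sqrt{|I||I^{*}|}}\le\sqrt{|I^{*}|/|I|}<1/\sqrt{\log n}$; or $|I|<|I^{*}|/\log n$, in which case $|I\cap I^{*}|\le|I|$ forces $\frac{|I\cap I^{*}|}{\sqrt{|I||I^{*}|}}\le\sqrt{|I|/|I^{*}|}<1/\sqrt{\log n}$. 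Writing the signal part as $\mu(1-D(I,I^{*}))\sqrt{|I^{*}|}$, both cases give the uniform dilution bound $\frac{\mu|I\cap I^{*}|}{\sqrt{|I|}}<\frac{\mu\sqrt{|I^{*}|}}{\sqrt{\log n}}$.

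For the penalized-noise part I would apply Lemma \ref{lem: Dumbgen} with $J$ the whole sequence (equivalently, use the already-noted fact that the penalized scan of pure noise is $O_p(1)$, since under $H_1$ the noise term has the same distribution as under $H_0$), giving $\max_{I\in\mathcal{I}_{app}}\left(\boldsymbol{Z}(I)-\sqrt{2\log\frac{en}{|I|}}\right)=O_p(1)$. Combining with the dilution bound yields $\max_{I\in K_1\backslash K_{near}}\boldsymbol{X}(I)\le\frac{\mu\sqrt{|I^{*}|}}{\sqrt{\log n}}+O_p(1)$. On the other hand, exactly as in the proof of Theorem \ref{thm: Main theorem}, $\boldsymbol{X}(\tilde{I})=\mu(1-D(\tilde{I},I^{*}))\sqrt{|I^{*}|}-\sqrt{2\log\frac{en}{|\tilde{I}|}}+\boldsymbol{Z}(\tilde{I})$. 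Subtracting and absorbing the dilution factor into $\epsilon_n:=D(\tilde{I},I^{*})+1/\sqrt{\log n}=o(1)$, the problem reduces to showing that the deterministic gap $G_n:=\mu\sqrt{|I^{*}|}(1-\epsilon_n)-\sqrt{2\log\frac{en}{|\tilde{I}|}}$ tends to $+\infty$; the stochastic remainder $\boldsymbol{Z}(\tilde{I})-O_p(1)$ is $O_p(1)$, so $G_n\to\infty$ gives $P_{H_1}(\max_{I\in K_1\backslash K_{near}}\boldsymbol{X}(I)\ge\boldsymbol{X}(\tilde{I}))\to0$.

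The main obstacle is that $\mu$ may be arbitrarily large, so one cannot simply write $\mu\sqrt{|I^{*}|}(1-\epsilon_n)=(\sqrt{2\log\frac{en}{|I^{*}|}}+b_n)(1-\epsilon_n)$ and dismiss the cross term $\epsilon_n\mu\sqrt{|I^{*}|}$ as negligible against $b_n$, since that term itself may diverge. I would instead split on the size of $M:=\mu\sqrt{|I^{*}|}$ relative to $s_n:=\sqrt{2\log\frac{en}{|I^{*}|}}$. If $M\le2s_n$, then $\epsilon_n M\le2\epsilon_n s_n=O(1)$, because $D(\tilde{I},I^{*})=O(1/\sqrt{\log_2(n/|I^{*}|)})$ and $\sqrt{2\log\frac{en}{|\tilde{I}|}}-s_n=o(1)$ by Lemma \ref{lem:Approximation_One_dim}; hence $G_n\ge M-O(1)-(s_n+o(1))\ge b_n-O(1)\to\infty$. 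If $M>2s_n$, then $\sqrt{2\log\frac{en}{|\tilde{I}|}}\le s_n+o(1)<M/2+o(1)$, so $G_n>M(1/2-\epsilon_n)-o(1)\ge M/4\to\infty$, where $s_n\to\infty$ (and hence $M\to\infty$) follows from $|I^{*}|\ll n^{1-\kappa}$. In either regime $G_n\to\infty$, which completes the argument.
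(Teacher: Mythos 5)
Your proof is correct and takes essentially the same route as the paper: the same scale-mismatch (dilution) bound $1-D(I,I^{*})\le 1/\sqrt{\log n}$ on $K_{1}\backslash K_{near}$, the same $O_{p}(1)$ control of the penalized noise via Lemma \ref{lem: Dumbgen}, and the same conclusion that the deterministic gap diverges because $b_{n}\rightarrow\infty$. The only difference is cosmetic: where you split on $M=\mu\sqrt{|I^{*}|}$ versus $2\sqrt{2\log(en/|I^{*}|)}$ to control the cross term $\epsilon_{n}M$, the paper keeps the signal contribution on both sides of the inequality, so that the net coefficient of $\mu\sqrt{|I^{*}|}$ is $1-D(\tilde{I},I^{*})-1/\sqrt{\log n}>0$ and substituting the boundary value $\sqrt{2\log(en/|I^{*}|)}+b_{n}$ is automatically the worst case; your case analysis makes this monotonicity point explicit but is not otherwise needed.
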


Combining Lemma \ref{lem: lemma1_for_main_thm} and \ref{lem: lemma2_for_main_thm}, we see that $P_{H_{1}}(\max_{I\in K_{1}}X(I)\ge X(\tilde{I})) \rightarrow 0$. This together with Equations \eqref{eq:first_part_small} and \eqref{eq:second_part_small} will lead to $P_{H_{1}}(D(\hat{I},I^{*})>\delta_{n}) \rightarrow 0$ and we finish our proof.

\end{proof}

\subsubsection*{Proof of Lemma \ref{lem: lemma1_for_main_thm}:}
\begin{proof}
For each interval $I\in K_{near}$, we have
\begin{eqnarray*}
 &  & P_{H_{1}}(\boldsymbol{X}(I)\ge\boldsymbol{X}(\tilde{I}))\\
 & = & P_{H_{1}}(\boldsymbol{Y}(I)-\boldsymbol{Y}(\tilde{I})\ge\sqrt{2\log\frac{en}{|I|}}-\sqrt{2\log\frac{en}{|\tilde{I}|}}).
\end{eqnarray*}
Simple calculation shows that $\boldsymbol{Y}(I)-\boldsymbol{Y}(\tilde{I})$
has a normal distribution with mean
\[
-\mu \sqrt{|I^{*}|}(D(I,I^{*})-D(\tilde{I},I^{*})),
\]
and variance
\[
2D(I,\tilde{I})\le2.
\]
Thus, 
\begin{eqnarray*}
 &  & \sum_{I\in K_{near}}P_{H_{1}}(\boldsymbol{X}(I)\ge\boldsymbol{X}(\tilde{I}))\\
 & \le & \sum_{I\in K_{near}}\text{\ensuremath{\bar{\Phi}}}\left(\left(\left( \mu \sqrt{|I^{*}|}(D(I,I^{*})-D(\tilde{I},I^{*})) \right)-\sqrt{2\log\frac{en}{|\tilde{I}|}}+\sqrt{2\log\frac{en}{|I|}}\right)/\sqrt{2}\right)
\end{eqnarray*}
where $\bar{\Phi}$ denotes the upper cumulative distribution function
of the normal distribution.\\
Under the assumption that $\delta_{n}\gg \sqrt{\log\log n} / \sqrt{\log{n}}$ and $b_n \rightarrow +\infty$, we have
\[
(\delta_{n}-\frac{1}{3\sqrt{\log_{2}\frac{n}{|I^{*}|}}})(\sqrt{2\log\frac{en}{|I^{*}|}}+b_{n})\gg\sqrt{\log\log n}.
\]
Note that $D(I,I^{*})>\delta_{n}$ in $K_{1}$ and $D(\tilde{I},I^{*})\le\frac{1}{3\sqrt{\log_{2}\frac{n}{|I^{*}|}}}$ and $\mu \sqrt{|I^{*}|} \ge \sqrt{2\log\frac{en}{|I^{*}|}}+b_{n}$
so
\begin{equation}
\mu \sqrt{|I^{*}|} (D(I,I^{*})-D(\tilde{I},I^{*})) \gg\sqrt{\log\log n}.\label{eq: phi is small -1}
\end{equation}
When $I\in K_{near}$, we have $\frac{|I|}{\log n}\le|I^{*}|\le|I|\log n$,
then 
\begin{eqnarray*}
\left|\sqrt{2\log\frac{en}{|I|}}-\sqrt{2\log\frac{en}{|\text{\ensuremath{\tilde{I}}}|}}\right| & \le & \left|\sqrt{2\log\frac{en}{|I^{*}|}}-\sqrt{2\log\frac{en}{|\text{\ensuremath{\tilde{I}}}|}}\right|+\left|\sqrt{2\log\frac{en}{|I^{*}|}}-\sqrt{2\log\frac{en}{|I|}}\right|\\
 & \le & \left|\sqrt{2\log\frac{en}{|I^{*}|}}-\sqrt{2\log\frac{en}{|\text{\ensuremath{\tilde{I}}}|}}\right|+\frac{2\log\log n}{\sqrt{2\log\frac{en}{|I^{*}|}}}.
\end{eqnarray*}
Since we assume there exists a $\kappa>0$, such that $|I^{*}|\ll n^{1-\kappa}$,
then $\frac{2\log\log n}{\sqrt{2\log\frac{en}{|I^{*}|}}}\rightarrow0$.
By Lemma \ref{lem:Approximation_One_dim}, $\left|\sqrt{2\log\frac{en}{|I^{*}|}}-\sqrt{2\log\frac{en}{|\text{\ensuremath{\tilde{I}}}|}}\right|=o(1)$.
\begin{equation}
\left|\sqrt{2\log\frac{en}{|I|}}-\sqrt{2\log\frac{en}{|\text{\ensuremath{\tilde{I}}}|}}\right|=o(1).\label{eq: Phi is small -2}
\end{equation}

Combine \eqref{eq: phi is small -1}, \eqref{eq: Phi is small -2}
and using the inequality $\bar{\Phi}(x)\le\exp(-\frac{x^{2}}{2})$
for $x>1$, we have,
\[
\text{\ensuremath{\bar{\Phi}}}\left(\left(\left( \mu \sqrt{|I^{*}|}(D(I,I^{*})-D(\tilde{I},I^{*}))\right)-\sqrt{2\log\frac{en}{|\tilde{I}|}}+\sqrt{2\log\frac{en}{|I|}}\right)/\sqrt{2}\right)\le\log^{-\eta}n
\]
for all $\eta>0$. \\
Simple counting shows the cardinality of the set $K_{near}$ is $O(\log^{2}n)$.
Thus 
\[
\sum_{I\in K_{near}}P_{H_{1}}(\boldsymbol{X}(I)\ge\boldsymbol{X}(I^{*}))\rightarrow0.
\]

\end{proof}

\subsubsection*{Proof of Lemma \ref{lem: lemma2_for_main_thm}:}
\begin{proof}
When $I\in K_{1}\backslash K_{near}$, we have
$1-D(I,I^{*})\le\min(\sqrt{\frac{|I|}{|I^{*}|}},\sqrt{\frac{|I^{*}|}{|I|}})\le1/\sqrt{\log n}$.

Thus,
\begin{eqnarray*}
 &  & P_{H_{1}}(\max_{I\in K_{1}\backslash K_{near}}\boldsymbol{X}(I)\ge\boldsymbol{X}(\tilde{I}))\\
 & = & P_{H_{1}}(\max_{I\in K_{1}\backslash K_{near}}(\boldsymbol{Z}(I)+(1-D(I,I^{*}))(\sqrt{2\log\frac{en}{|I^{*}|}}+b_{n})-\sqrt{2\log\frac{en}{|I|}})\\
 &  & \qquad\ge\boldsymbol{Z}(\tilde{I})+(1-D(\tilde{I},I^{*}))(\sqrt{2\log\frac{en}{|I^{*}|}}+b_{n})-\sqrt{2\log\frac{en}{|\tilde{I}|}}))\\
 & \le & P_{H_{1}}(\max_{I\in K_{1}\backslash K_{near}}(\boldsymbol{Z}(I)-\sqrt{2\log\frac{en}{|I|}}+\frac{1}{\sqrt{\log n}}(\sqrt{2\log\frac{en}{|I^{*}|}}+b_{n}))\\
 &  & \qquad\ge\boldsymbol{Z}(\tilde{I})-\sqrt{2\log\frac{en}{|\tilde{I}|}}+(1-\frac{1}{3\sqrt{\log_{2}\frac{n}{|I^{*}|}}})(\sqrt{2\log\frac{en}{|I^{*}|}}+b_{n}))\\
 & \le & P_{H_{1}}(\max_{I\in K_{1}\backslash K_{near}}(\boldsymbol{Z}(I)-\sqrt{2\log\frac{en}{|I|}})\\
 &  & \qquad\ge\boldsymbol{Z}(\tilde{I})-(\sqrt{2\log\frac{en}{|\tilde{I}|}}-\sqrt{2\log\frac{en}{|I^{*}|}})\\
 &  & \qquad\quad-(\frac{1}{3\sqrt{\log\frac{n}{|I^{*}|}}}+\frac{1}{\sqrt{\log n}})(\sqrt{2\log\frac{en}{|I^{*}|}})+(1-\frac{1}{3\sqrt{\log_{2}\frac{n}{|I^{*}|}}}-\frac{1}{\sqrt{\log n}})b_{n})
\end{eqnarray*}
Since $ $$\max_{I\in K_{1}\backslash K_{near}}(\boldsymbol{Z}(I)-\sqrt{2\log\frac{en}{|I|}})=O_{p}(1)$
by Lemma \ref{lem: Dumbgen}, $\boldsymbol{Z}(\tilde{I})=O_{p}(1)$,
$(\frac{1}{3\sqrt{\log_{2}\frac{n}{|I^{*}|}}}+\frac{1}{\sqrt{\log n}})(\sqrt{2\log\frac{en}{|I^{*}|}})=O(1)$,
$\sqrt{2\log\frac{en}{|\tilde{I}|}}-\sqrt{2\log\frac{en}{|I^{*}|}}=o(1)$
by Lemma \ref{lem:Approximation_One_dim} and $(1-\frac{1}{3\sqrt{\log_{2}\frac{n}{|I^{*}|}}}-\frac{1}{\sqrt{\log n}})b_{n}\rightarrow\infty$,
then $P_{H_{1}}(\max_{I\in K_{1}\backslash K_{near}}\boldsymbol{X}(I)\ge\boldsymbol{X}(\tilde{I}))\rightarrow0$
and we finish our proof.

\end{proof}

\subsection*{A.2 Proofs for Section \ref{sec:Multi-dimensional-block-signal}}

\subsubsection*{Proof of Proposition \textmd{\ref{prop: High-dim-null-distribution}}:}
\begin{proof}
Denote \#$\mathcal{I}_{app}^{\{2\}}(\ell)$ be the number of rectangles
in $\mathcal{I}_{app}^{\{2\}}(\ell)$. A simple counting shows that
when $\ell\le\lfloor\log_{2}\frac{n^{2}}{\log n}\rfloor$, $\#I_{app}^{\{2\}}(\ell)=2\ell^{3}2^{\ell}$.
Let's abuse the notation a bit: for $\lceil\log_{2}n^{2}\rceil\ge\ell>\lfloor\log_{2}\frac{n^{2}}{\log n}\rfloor$,
let $I_{app}^{\{2\}}(\ell)=\{I\in I_{small},n^{2}2^{-\ell}<|I|\le2n^{2}2^{-\ell}\}$.
Then simple counting again gives $\#I_{app}^{\{2\}}(\ell)\le n^{2}\sum_{|I|=n^{2}2^{-\ell}}^{2n^{2}2^{-\ell}}|I|\le2n^{4}2^{-\ell}\log n\le2\log^{3}n2^{\ell}\le2\ell^{3}2^{\ell}$,
where the the second to the last inequality comes from $\ell>\lfloor\log_{2}\frac{n^{2}}{\log n}\rfloor$,
so $2^{2l}\ge\frac{n^{4}}{\log^{2}n}$ and last inequality comes from
$\ell>\lfloor\log_{2}\frac{n^{2}}{\log n}\rfloor\ge\log n$. Thus
for all $1\le\ell\le\lceil\log_{2}n^{2}\rceil$, we have $\#I_{app}^{\{2\}}(\ell)\le2\ell^{3}2^{\ell}$.

For $\kappa>2$, we obtain:
\begin{eqnarray*}
 &  & P(\max_{I\in I_{app}^{\{2\}}}(\boldsymbol{Y}(I)-\sqrt{2\log\frac{en^{2}}{|I|}})>\kappa)\\
 & \le & \sum_{\ell=1}^{\lceil\log_{2}n^{2}\rceil}\#I_{app}^{\{2\}}(\ell)\max_{I\in I_{app}(\ell)}\exp(-\frac{1}{2}(\sqrt{2\log\frac{en^{2}}{|I|}}+\kappa)^{2})\\
 & \le & \sum_{\ell=1}^{\lceil\log_{2}n^{2}\rceil}2\ell^{3}2^{\ell}2^{-\ell}\exp(-\kappa\sqrt{\ell}-\kappa^{2}/2)\\
 & = & \sum_{\ell=1}^{\lceil\log_{2}n^{2}\rceil}2\ell^{3}\exp(-\kappa\sqrt{\ell}-\kappa^{2}/2)\\
 & \le & C\exp(-\kappa^{2}/2)
\end{eqnarray*}
for some constant $C>0$ not depending on $n$. Thus, we have 
\[
\max_{I\in I_{app}^{\{2\}}}(\boldsymbol{Y}(I)-\sqrt{2\log\frac{en^{2}}{|I|}}) = O_p(1).
\]
\end{proof}

Analogously to Lemma \ref{lem:Approximation_One_dim}, the following
lemma shows that in the two-dimensional case, we can also approximate
every rectangle well enough by $\mathcal{I}_{app}^{\{2\}}$.
\begin{lem}
\label{lem:Approximation_Mul_dim}For each $I^{*}$ with $|I^{*}|\ll n^{2}$,
there exists an $\tilde{I}\in\mathcal{I}_{app}^{\{2\}}$ such that
$D(\tilde{I},I^{*})\le\frac{6}{\sqrt{\log_{2}\frac{n^{2}}{|I^{*}|}}}$
and $\sqrt{2\log\frac{en^{2}}{|\tilde{I}|}}-\sqrt{2\log\frac{en^{2}}{|I^{*}|}}=o(1)$.
\end{lem}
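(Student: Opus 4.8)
The plan is to follow the one-dimensional argument of Lemma~\ref{lem:Approximation_One_dim}, the genuinely new ingredient being the two-dimensional bookkeeping of the aspect-ratio index $i$. Write $I^{*}=(a_{1},a_{2}]\times(b_{1},b_{2}]$, with $x$-width $W=a_{2}-a_{1}$ and $y$-width $H=b_{2}-b_{1}$, so that $|I^{*}|=WH$. First dispose of the trivial case: if $|I^{*}|\le m_{\ell_{max}}$ then $I^{*}\in\mathcal{I}_{small}^{\{2\}}\subseteq\mathcal{I}_{app}^{\{2\}}$, so taking $\tilde{I}=I^{*}$ gives $D(\tilde{I},I^{*})=0$ and a vanishing penalty gap, and there is nothing to prove. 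Assume henceforth $|I^{*}|>m_{\ell_{max}}$.

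Next I would pin the scale. Set $\ell^{*}=\lfloor\log_{2}\frac{n^{2}}{|I^{*}|}\rfloor$, so that $m_{\ell^{*}+1}<|I^{*}|\le m_{\ell^{*}}$, $1\le\ell^{*}\le\ell_{max}$ and $|I^{*}|\asymp 2^{-\ell^{*}}n^{2}$; the inequality $|I^{*}|\le m_{\ell^{*}}$ will be used repeatedly. For fixed $\ell^{*}$, as $i$ ranges over $\{0,\ldots,\ell^{*}\}$ the rectangles of $\mathcal{I}_{app}^{\{2\}}(\ell^{*},i)$ keep their area pinned near $2^{-\ell^{*}}n^{2}$ while trading resolution between the two axes: the largest admissible $x$-width is $\lfloor\sqrt{\ell^{*}}\rfloor d_{\ell^{*},i}^{(1)}\asymp 2^{i-\ell^{*}}n$ with spacing $d_{\ell^{*},i}^{(1)}\asymp 2^{i-\ell^{*}}n/\sqrt{\ell^{*}}$, and the largest admissible $y$-width is $\lfloor 2\sqrt{\ell^{*}}\rfloor d_{\ell^{*},i}^{(2)}\asymp 2\cdot 2^{-i}n$ with spacing $d_{\ell^{*},i}^{(2)}\asymp 2^{-i}n/\sqrt{\ell^{*}}$. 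The requirement that both $W$ and $H$ fit inside these windows is exactly $2^{i}\in[2^{\ell^{*}}W/n,\,2n/H]$; this interval has endpoint ratio $2n^{2}/(2^{\ell^{*}}|I^{*}|)=2m_{\ell^{*}}/|I^{*}|\ge 2$, so it contains a power of two, and a short computation shows it also contains the continuous minimizer of the combined relative grid error. I would pick $i^{*}\in\{0,\ldots,\ell^{*}\}$ to be the feasible power of two nearest this minimizer, so as to balance the two snap errors $2d_{\ell^{*},i}^{(1)}/W$ and $2d_{\ell^{*},i}^{(2)}/H$.

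Then I would build $\tilde{I}\subseteq I^{*}$ by sliding each of the four sides of $I^{*}$ inward to the nearest grid line of $\mathcal{I}_{app}^{\{2\}}(\ell^{*},i^{*})$. Each side moves by less than the corresponding spacing, so $|\tilde{I}|\ge(W-2d_{\ell^{*},i^{*}}^{(1)})(H-2d_{\ell^{*},i^{*}}^{(2)})\ge|I^{*}|\bigl(1-\tfrac{2d_{\ell^{*},i^{*}}^{(1)}}{W}-\tfrac{2d_{\ell^{*},i^{*}}^{(2)}}{H}\bigr)$; because the balanced index keeps $W$ and $H$ a growing factor $\asymp\sqrt{\ell^{*}}$ above their spacings, $\tilde{I}$ is non-degenerate and obeys all the size constraints defining $\mathcal{I}_{app}^{\{2\}}(\ell^{*},i^{*})$. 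Since $\tilde{I}\subseteq I^{*}$ we get $D(\tilde{I},I^{*})=1-\sqrt{|\tilde{I}|/|I^{*}|}\le \tfrac{2d_{\ell^{*},i^{*}}^{(1)}}{W}+\tfrac{2d_{\ell^{*},i^{*}}^{(2)}}{H}$. Minimizing the right-hand side over the feasible $i^{*}$ gives a combined error of order $\sqrt{m_{\ell^{*}}/(\ell^{*}|I^{*}|)}$, and with $\ell^{*}=\lfloor\log_{2}\frac{n^{2}}{|I^{*}|}\rfloor$ together with $|I^{*}|\le m_{\ell^{*}}$ this is, for $n$ large, at most $6/\sqrt{\log_{2}\frac{n^{2}}{|I^{*}|}}$, the claimed Hamming bound. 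Note the assumption $|I^{*}|\ll n^{2}$ forces $\log_{2}\frac{n^{2}}{|I^{*}|}\to\infty$, which is precisely what lets the constant settle to its stated value.

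Finally the penalty gap is handled verbatim as in Lemma~\ref{lem:Approximation_One_dim}: from $|I^{*}|\ge|\tilde{I}|\ge|I^{*}|(1-6/\sqrt{\log_{2}\frac{n^{2}}{|I^{*}|}})$ and $\sqrt{a+b}-\sqrt{a}\le b/(2\sqrt{a})$ with $a=2\log\frac{en^{2}}{|I^{*}|}\to\infty$ and $b\to 0$, one obtains $0\le\sqrt{2\log\frac{en^{2}}{|\tilde{I}|}}-\sqrt{2\log\frac{en^{2}}{|I^{*}|}}=o(1)$. The main obstacle is the middle step: unlike in one dimension one must exhibit a single aspect index $i^{*}$ that simultaneously accommodates both side lengths and balances the two grid resolutions, and then control the constant through the losses incurred by rounding both $\ell^{*}$ and $i^{*}$ to integers. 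This is exactly where the calibrated exponents of $d_{\ell,i}^{(1)},d_{\ell,i}^{(2)}$ and the width caps $\lfloor\sqrt{\ell}\rfloor$, $\lfloor 2\sqrt{\ell}\rfloor$ in the definition of $\mathcal{I}_{app}^{\{2\}}$ are needed, and where most of the care goes; as in the one-dimensional proof one may additionally need to replace the inscribed approximant by a circumscribed one in degenerate alignments, which only improves the estimate.
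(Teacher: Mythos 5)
Your proof is correct and is exactly the argument the paper intends: the paper omits the proof of this lemma, saying only that it is ``very similar to Lemma \ref{lem:Approximation_One_dim}'' and citing \cite{walther2010optimal}, and your construction --- fixing the scale $\ell^{*}$ with $m_{\ell^{*}+1}<|I^{*}|\le m_{\ell^{*}}$, picking a feasible aspect index $i^{*}$ (the window $[2^{\ell^{*}}W/n,\,2n/H]$ has endpoint ratio $2m_{\ell^{*}}/|I^{*}|\ge 2$ and hence contains a power of two), snapping the four sides of $I^{*}$ to the $(\ell^{*},i^{*})$ grid, and disposing of $|I^{*}|\le m_{\ell_{max}}$ and the penalty gap exactly as in one dimension --- is precisely that two-dimensional extension via Walther's approximation scheme. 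One small streamlining: no balancing over $i^{*}$ is actually needed, since for any feasible $i^{*}$ the two relative snap errors $2d^{(1)}_{\ell^{*},i^{*}}/W$ and $2d^{(2)}_{\ell^{*},i^{*}}/H$ are automatically $O(1/\sqrt{\ell^{*}})$, because the product of the ratios $2^{i^{*}-\ell^{*}}n/W$ and $2^{-i^{*}}n/H$ is pinned at $m_{\ell^{*}}/|I^{*}|\in[1,2)$ while feasibility bounds each ratio from below by an absolute constant.
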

The proof of this lemma is very similar to Lemma \ref{lem:Approximation_One_dim},
and thus is omitted. See also \cite{walther2010optimal}.

\subsubsection*{Proof of Theorem \ref{thm: Main theorem-high-dim}: }
\begin{proof}
Denote $\boldsymbol{X}(I)=\boldsymbol{Y}(I)-\sqrt{2\log\frac{en^{2}}{|I|}}$
for rectangle $I$. 

By the definition of $\gamma_{n}^{\{2\}}(\alpha)$,
\[
P_{H_{0}}(\hat{I}\ne\phi)=P_{H_{0}}(\max_{I\in\mathcal{I}_{app}}\boldsymbol{X}(I)\ge\gamma_{n}^{\{2\}}(\alpha))\le\alpha,
\]
which proves \eqref{eq:type I error}.

Now we turn to prove \eqref{eq: type II error}. Again, let $I^{*}$
denote the true rectangle with length $a^{*}$ and width $b^{*}$. Define
\[
K_{0}=\{I\in\mathcal{I}_{app}^{\{2\}}:I\cap I^{*}=\emptyset\}
\]
and 
\[
K_{1}=\{I\in\mathcal{I}_{app}^{\{2\}}:I\cap I^{*}\neq\emptyset\text{ and }D(I,I^{*})>\delta_{n}\}.
\]
By Lemma \ref{lem:Approximation_Mul_dim}, there exists an rectangle
$\tilde{I}\in\mathcal{I}_{app}^{\{2\}}$, such that $D(\tilde{I},I^{*})\le\frac{6}{\sqrt{\log_{2}\frac{n^{2}}{|I^{*}|}}}$
and $\sqrt{2\log\frac{en^{2}}{|\tilde{I}|}}-\sqrt{2\log\frac{en^{2}}{|I^{*}|}}=o(1)$.
$ $Then,
\begin{eqnarray*}
P_{H_{1}}(D(\hat{I},I^{*})>\delta_{n}) & \le & P_{H_{1}}(\max(\max_{I\in\mathcal{I}_{app,}^{\{2\}}D(I,I^{*})>\delta_{n}}\boldsymbol{X}(I),\gamma_{n}^{\{2\}}(\alpha))\ge\boldsymbol{X}(\tilde{I}))\\
 & \le & P_{H_{1}}(\max_{I\in K_{0}}\boldsymbol{X}(I)\ge\boldsymbol{X}(\tilde{I}))+P_{H_{1}}(\max_{I\in K_{1}}(\boldsymbol{X}(I)\ge\boldsymbol{X}(\tilde{I}))+P_{H_{1}}(\gamma_{n}^{\{2\}}(\alpha)\ge\boldsymbol{X}(\tilde{I})).
\end{eqnarray*}
By Proposition \ref{prop: High-dim-null-distribution}, $\max_{I\in K_{0}}\boldsymbol{X}(I)=O_{p}(1)$
and $\gamma_{n}^{\{2\}}(\alpha)<\infty$ a.s.. Notice that under $H_{1}$,
by Lemma \ref{lem:Approximation_Mul_dim} and the same reasoning as
in the proof of Theorem \ref{thm: Main theorem}, we have $X(\tilde{I})\stackrel{p}{\rightarrow}\infty$.
Thus $P_{H_{1}}(\max_{I\in K_{0}}\boldsymbol{X}(I)\ge X(\tilde{I}))\rightarrow0$
and $P_{H_{1}}(\gamma_{n}^{\{2\}}(\alpha)\ge\boldsymbol{X}(\tilde{I}))\rightarrow0$. 

Now consider the term $P_{H_{1}}(\max_{I\in K_{1}}(\boldsymbol{X}(I)\ge\boldsymbol{X}(\tilde{I}))$.
Denote 
\[
K_{near}=\{I\in K_{1}:\frac{a^{*}}{\log n}\le a\le a^{*}\log n\text{ and }\frac{b^{*}}{\log n}\le b\le b^{*}\log n\},
\]
then
\begin{eqnarray*}
P_{H_{1}}(\max_{I\in K_{1}}X(I)\ge X(\tilde{I})) & \le & \sum_{I\in K_{near}}P_{H_{1}}(\boldsymbol{X}(I)\ge\boldsymbol{X}(\tilde{I}))+P_{H_{1}}(\max_{I\in K_{1}\backslash K_{near}}\boldsymbol{X}(I)\ge\boldsymbol{X}(\tilde{I}))\\
 & =: & (A)+(B)
\end{eqnarray*}

Consider part (A) first. Similar to the proof of Theorem \ref{thm: Main theorem},
one can show that for all $I\in K_{near}$
\[
P_{H_{1}}(\boldsymbol{X}(I)\ge\boldsymbol{X}(\tilde{I}))\le\log^{-\eta}n
\]
for all $\eta>0$. Simple counting shows the cardinality of the set
$K_{near}$ is $O(\log^{4}n)$. Thus 
\[
\sum_{I\in K_{near}}P_{H_{1}}(\boldsymbol{X}(I)\ge\boldsymbol{X}(I^{*}))\rightarrow0.
\]

Consider part (B), in this case, $1-D(I,I^{*})\le1/\sqrt{\log n}$.
Thus
\begin{eqnarray*}
 &  & P_{H_{1}}(\max_{I\in K_{1}\backslash K_{near}}\boldsymbol{X}(I)\ge\boldsymbol{X}(\tilde{I}))\\
 & = & P_{H_{1}}(\max_{I\in K_{1}\backslash K_{near}}(\boldsymbol{Z}(I)+(1-D(I,I^{*}))(\sqrt{2\log\frac{en^{2}}{|I^{*}|}}+b_{n})-\sqrt{2\log\frac{en^{2}}{|I|}})\\
 &  & \quad\ge\boldsymbol{Z}(\tilde{I})+(1-D(\tilde{I},I^{*}))(\sqrt{2\log\frac{en^{2}}{|I^{*}|}}+b_{n})-\sqrt{2\log\frac{en^{2}}{|\tilde{I}|}}))\\
 & \le & P_{H_{1}}(\max_{I\in K_{1}\backslash K_{near}}(\boldsymbol{Z}(I)-\sqrt{2\log\frac{en^{2}}{|I|}}+\frac{1}{\sqrt{\log n}}(\sqrt{2\log\frac{en^{2}}{|I^{*}|}}+b_{n}))\\
 &  & \quad\ge\boldsymbol{Z}(\tilde{I})-\sqrt{2\log\frac{en^{2}}{|\tilde{I}|}}+(1-\frac{6}{\sqrt{\log_{2}\frac{n^{2}}{|I^{*}|}}})(\sqrt{2\log\frac{en^{2}}{|I^{*}|}}+b_{n}))\\
 & \le & P_{H_{1}}(\max_{I\in K_{1}\backslash K_{near}}(\boldsymbol{Z}(I)-\sqrt{2\log\frac{en^{2}}{|I|}})\ge\boldsymbol{Z}(\tilde{I})-(\sqrt{2\log\frac{en^{2}}{|\tilde{I}|}}-\sqrt{2\log\frac{en^{2}}{|I^{*}|}})\\
 &  & \quad-(\frac{6}{\sqrt{\log\frac{n^{2}}{|I^{*}|}}}+\frac{1}{\sqrt{\log n}})(\sqrt{2\log\frac{en^{2}}{|I^{*}|}})+(1-\frac{6}{\sqrt{\log_{2}\frac{n^{2}}{|I^{*}|}}}-\frac{1}{\sqrt{\log n}})b_{n})
\end{eqnarray*}

Since $ $$\max_{I\in K_{1}\backslash K_{near}}(\boldsymbol{Z}(I)-\sqrt{2\log\frac{en^{2}}{|I|}})=O_{p}(1)$
by Proposition \ref{prop: High-dim-null-distribution}, $\boldsymbol{Z}(\tilde{I})=O_{p}(1)$,
$(\frac{6}{\sqrt{\log_{2}\frac{n^{2}}{|I^{*}|}}}+\frac{1}{\sqrt{\log n}})(\sqrt{2\log\frac{en^{2}}{|I^{*}|}})=O(1)$,
$(\sqrt{2\log\frac{en^{2}}{|\tilde{I}|}}-\sqrt{2\log\frac{en^{2}}{|I^{*}|}})=o(1)$
by Lemma \ref{lem:Approximation_Mul_dim}, and $(1-\frac{6}{\sqrt{\log_{2}\frac{n^{2}}{|I^{*}|}}}-\frac{1}{\sqrt{\log n}})b_{n}\rightarrow\infty$,
then $P_{H_{1}}(\max_{I\in K_{1}\backslash K_{near}}\boldsymbol{X}(I)\ge\boldsymbol{X}(\tilde{I}))\rightarrow0$.
\end{proof}

\subsection*{A.3 Proof for Section \ref{sec:Signal-identification-More-Than-One}}

We need the following lemma in the proof of Theorem \ref{thm: More than one signal}.
\begin{lem}
\label{lem: Tail_Of_L}Denote $L=\max_{I\in I_{app}}(\boldsymbol{Y}(I)-\sqrt{2\log\frac{en}{|I|}})$.
Let $Z$ be a standard normal random variable, not necessarily independent
with $L$, then for all $\kappa>4$, there exists a constant $C>0$
not depending on $n$ and $\kappa$ such that 
\[
P(L+Z>\kappa)\le C\exp(-\kappa^{2}/8)
\]
\end{lem}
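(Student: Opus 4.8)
The plan is to decouple $L$ and $Z$ by a crude union bound, so that the (possibly non-trivial) dependence between them never enters, and then to control each marginal tail separately. Since the event $\{L+Z>\kappa\}$ forces at least one of $\{L>\kappa/2\}$ or $\{Z>\kappa/2\}$ (if both $L\le\kappa/2$ and $Z\le\kappa/2$ then $L+Z\le\kappa$), I would begin from
\[
P(L+Z>\kappa)\le P(L>\kappa/2)+P(Z>\kappa/2).
\]
This splitting is exactly where the exponent $\kappa^{2}/8$ in the conclusion (rather than $\kappa^{2}/2$) comes from: evaluating a Gaussian-type exponent $(\cdot)^{2}/2$ at the halved budget $\kappa/2$ produces $\kappa^{2}/8$.

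The second term is immediate. Using the standard tail bound $\bar{\Phi}(x)\le\exp(-x^{2}/2)$ valid for $x>1$ (the same bound already invoked in the proof of Lemma \ref{lem: lemma1_for_main_thm}), and noting that $\kappa>4$ gives $\kappa/2>1$, I get $P(Z>\kappa/2)=\bar{\Phi}(\kappa/2)\le\exp(-\kappa^{2}/8)$.

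For the first term I would establish a one-dimensional tail estimate of the form $P(L>t)\le C'\exp(-t^{2}/2)$ for all $t>2$, with $C'$ independent of $n$, and then apply it at $t=\kappa/2>2$ (this is precisely where the hypothesis $\kappa>4$ is used), giving $P(L>\kappa/2)\le C'\exp(-\kappa^{2}/8)$. This estimate is the univariate counterpart of the Boole's-inequality computation already carried out for Proposition \ref{prop: High-dim-null-distribution}, and is the quantitative statement underlying $P_n=O_p(1)$ (see \cite{rivera2013optimal}). Concretely, under $H_0$ each $\boldsymbol{Y}(I)$ is standard normal, so
\[
P\!\left(\boldsymbol{Y}(I)-\sqrt{2\log\tfrac{en}{|I|}}>t\right)\le\frac{|I|}{en}\exp\!\left(-\frac{t^{2}}{2}-t\sqrt{2\log\tfrac{en}{|I|}}\right).
\]
Summing over $I\in\mathcal{I}_{app}(\ell)$ with $|\mathcal{I}_{app}(\ell)|\le144\,\ell\,2^{\ell}$ and $|I|\le 2m_{\ell}=2n2^{-\ell}$ yields a per-level bound of order $\ell\exp(-\tfrac{t^{2}}{2}-c\,t\sqrt{\ell})$, so that $\sum_{\ell}\ell\exp(-c\,t\sqrt{\ell})$ is bounded by an absolute constant once $t$ is bounded below; the contribution of $\mathcal{I}_{small}$ is handled the same way via $|\mathcal{I}_{small}|\le 2n\log n$ and $|I|\le m_{\ell_{max}}=O(\log n)$. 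All of this produces $C'\exp(-t^{2}/2)$ with $C'$ uniform in $n$.

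Combining the two pieces gives $P(L+Z>\kappa)\le(C'+1)\exp(-\kappa^{2}/8)$, so the claim holds with $C=C'+1$. The only substantive ingredient is the tail bound on $L$; since it merely reruns the scale-by-scale union-bound argument of Proposition \ref{prop: High-dim-null-distribution} in the univariate setting (using the cardinality estimates recorded when $\mathcal{I}_{app}$ was defined), I do not anticipate a genuine obstacle. The one point to watch is keeping the constant $C'$ independent of $n$, which the geometric decay of the per-scale contributions guarantees.
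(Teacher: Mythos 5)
Your proof is correct and takes essentially the same route as the paper: the paper's proof likewise splits $P(L+Z>\kappa)\le P(L>\kappa/2)+P(Z>\kappa/2)$, bounds $P(Z>\kappa/2)\le 2\exp(-\kappa^{2}/8)$ by the Gaussian tail, and invokes $P(L>t)\le C'\exp(-t^{2}/2)$ for $t>2$ (applied at $t=\kappa/2$, which is where $\kappa>4$ enters) by appeal to the union-bound argument of Proposition \ref{prop: High-dim-null-distribution}. The only difference is that you explicitly carry out the univariate scale-by-scale computation for the tail of $L$, including the $\mathcal{I}_{small}$ contribution, which the paper leaves implicit by that reference.
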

\begin{proof}
Similar to the proof of Proposition \ref{prop: High-dim-null-distribution},
we know that when $\kappa>2$, $P(L>\kappa)\le C^{'}\exp(-\kappa^{2}/2)$
for some $C^{'}>0$ not depending on $n$ and $\kappa$. Thus,

\begin{eqnarray*}
P(L+Z>\kappa) & \le & P(L>\kappa/2)+P(Z>\kappa/2)\\
 & \le & C^{'}\exp(-\kappa^{2}/8)+2\exp(-\kappa^{2}/8)\\
 & \le & C\exp(-\kappa^{2}/8).
\end{eqnarray*}

\end{proof}

\subsubsection*{Proof of Theorem \ref{thm: More than one signal}:}
\begin{proof}
As before, denote $\boldsymbol{X}(I)=\boldsymbol{Y}(I)-\sqrt{2\log\frac{en}{|I|}}$
for interval $I$. 

When there exists no signal, by the definition of $M_{n}$,
\[
P_{H_{0}}(I\ne\emptyset)=P_{H_{0}}(\max_{I\in\mathcal{I}_{app}}\boldsymbol{X}(I)>\gamma_{n}^ {})\le \alpha,
\]
which proves \eqref{eq: Type I of multiple signals}.

Now we turn to prove \eqref{eq: Consistency of multiple signals}.
It is enough to show that with probability approaching 1, we will
not stop before the $Kth$ iteration and for each of the first $K$
iterations, we can correctly identify one of the true signals with precision $\delta_n$.

Recall the true signals $I^{*}=\{I_{1}^{*},I_{2}^{*},\ldots,I_{K}^{*}\}$.
By Lemma \ref{lem:Approximation_One_dim}, for each $j=1,\ldots,K$,
there exists an interval $\tilde{I_{j}}$, such that $D(\tilde{I_{j}},I_{j}^{*})\le\frac{1}{3\sqrt{\log_{2}\frac{n}{|I_{j}^{*}|}}}$
and $\sqrt{2\log\frac{en}{|\tilde{I}_{j}|}}-\sqrt{2\log\frac{en}{|I_{j}^{*}|}}=o(1)$.

Consider the event 
\[
E_{1}=\{\max(\max_{I\in\mathcal{I}_{app,}^{1}\min_{j=1}^{K}D(I,I_{j}^{*})>\delta_{n}}\boldsymbol{X}(I),\gamma_{n}(\alpha))<\min_{j=1}^{K}\boldsymbol{X}(\tilde{I_{j}})\}:=\{LHS<RHS\}.
\]
If $E_{1}$ holds, then we can be sure that the interval $\hat{I}_{1}$
identified by the first iteration satisfies $D(\hat{I}_{1},I_{j_{1}}^{*})<\delta_{n}$
for some $j_{1}$ in 1 to $K$ and $\hat{I}_{1}\cap I_{j}^{*}=\emptyset$
for all $j\ne j_{1}$ by the assumption of $d_{min}$. After the first
iteration, consider the event 
\[
E_{2}=\{\max(\max_{I\in\mathcal{I}_{app,}^{2}\min_{j=1,j\ne j_{1}}^{K}D(I,I_{j}^{*})>\delta_{n}}\boldsymbol{X}(I),\gamma_{n}(\alpha))<\min_{j=1,j\ne j_{1}}^{K}\boldsymbol{X}(\tilde{I_{j}})\}:=\{LHS<RHS\},
\]
the LHS of $E_{2}$ is non-increasing while the RHS of $E_{2}$ is
non-decreasing compared to those of $E_{1}$. Thus, if $E_{1}$ holds,
$E_{2}$ must hold, and we can be sure that the interval $\hat{I}_{2}$
identified by the second iteration satisfies $D(\hat{I}_{2},I_{j_{2}}^{*})\le\delta_{n}$
for some $j_{2}$ from 1 to $K$, $j_{2}\ne j_{1}$ and $\hat{I}_{2}\cap I_{j}^{*}=\emptyset$
for all $j\ne j_{2}$. If this procedure can be repeated for $K$ times,
then we can identify all $K$ signals with precision $\delta_{n}$.
Thus, 
\begin{eqnarray*}
 &  & P_{H_{1}}(\max_{j=1}^{K}\min_{i = 1}^{\hat{K}} D(\hat{I_{i}},I_{j}^{*})>\delta_{n})\\
 & \le & P_{H_{1}}(E_{1}^{C})\\
 & = & P_{H_{1}}(\max(\max_{I\in\mathcal{I}_{app,}^{1}\min_{j=1}^{K}D(I,I_{j}^{*})>\delta_{n}}\boldsymbol{X}(I),\gamma_{n}(\alpha))\ge\min_{j=1}^{K}\boldsymbol{X}(\tilde{I_{j}}))
\end{eqnarray*}
Define
\[
K_{0}=\{I\in\mathcal{I}_{app}:I\cap I_{j}^{*}=\emptyset\text{ for all }j=1,\ldots,K\}
\]
and 
\[
K_{1}=\{I\in\mathcal{I}_{app}:I\cap I_{j}^{*}\neq\emptyset\text{ for some }j=1,\ldots,K\text{ and }\min_{j=1}^{K}D(I,I_{j}^{*})>\delta_{n}\}.
\]
Then 
\[
\begin{aligned} & P_{H_{1}}(\max(\max_{I\in\mathcal{I}_{app,}^{1}\min_{j=1}^{K}D(I,I_{j}^{*})>\delta_{n}}\boldsymbol{X}(I),\gamma_{n}(\alpha))\ge\min_{j=1}^{K}\boldsymbol{X}(\tilde{I_{j}}))\\
\le & P_{H_{1}}(\max_{I\in K_{0}}\boldsymbol{X}(I)\ge\min_{j=1}^{K}\boldsymbol{X}(\tilde{I}{}_{j}))+P_{H_{1}}(\max_{I\in K_{1}}\boldsymbol{X}(I)\ge\min_{j=1}^{K}\boldsymbol{X}(\tilde{I_{j}}))+P_{H_{1}}(\gamma_{n}(\alpha)\ge\min_{j=1}^{K}\boldsymbol{X}(\tilde{I}{}_{j}))
\end{aligned}
\]

By Lemma \ref{lem: Dumbgen}, $\max_{I\in K_{0}}\boldsymbol{X}(I)\stackrel{d}{\le}L<\infty$
a.s. and $\gamma_{n}<\infty$ a.s.. Notice that under $H_{1}$ and our
assumption $b_{n,j}\gg\sqrt{\log\log n}$, each $X(\tilde{I}_{j})$
is Gaussian distributed with mean greater than
\[
(\sqrt{2\log\frac{en}{|I_{j}^{*}|}}+b_{n})(1-\frac{1}{3\sqrt{\log\frac{n}{|I_{j}^{*}|}}})-\sqrt{2\log\frac{en}{|\tilde{I}_{j}|}}\gg\sqrt{\log\log n}
\]
and variance 1. Then similar to the proof of Theorem \ref{thm: Main theorem},
for each $j=1,\ldots,K$, and $\eta>0$, we have 
\[
P_{H_{1}}(\max_{I\in K_{0}}\boldsymbol{X}(I)\ge\boldsymbol{X}(\tilde{I}{}_{j}))\le\log^{-\eta}n
\]
 and 
\[
P_{H_{1}}(\gamma_{n}(\alpha)\ge\boldsymbol{X}(\tilde{I}{}_{j}))\le\log^{-\eta}n.
\]
As a result, under the assumption that $K=\log^{p}n$ for some $p>0$,
$P_{H_{1}}(\max_{I\in K_{0}}\boldsymbol{X}(I)\ge\min_{j=1}^{K}\boldsymbol{X}(\tilde{I}{}_{j}))\le KP_{H_{1}}(\max_{I\in K_{0}}\boldsymbol{X}(I)\ge\boldsymbol{X}(\tilde{I}{}_{j}))\rightarrow0$
and $P_{H_{1}}(\gamma_{n}(\alpha)\ge\min_{j=1}^{K}\boldsymbol{X}(\tilde{I}{}_{j}))\le KP_{H_{1}}(\gamma_{n}(\alpha)\ge\boldsymbol{X}(\tilde{I}{}_{j}))\rightarrow0$.

We only need to show that $P_{H_{1}}(\max_{I\in K_{1}}\boldsymbol{X}(I)\ge\min_{j=1}^{K}\boldsymbol{X}(\tilde{I_{j}}))\rightarrow0$.
Denote 
\[
K_{near}=\{I\in K_{1}:\min_{j=1}^{K}D(I,I_{j}^{*})<1-1/\sqrt{\log n}\},
\]
then
\begin{eqnarray*}
P_{H_{1}}(\max_{I\in K_{1}}\boldsymbol{X}(I)\ge\min_{j=1}^{K}\boldsymbol{X}(\tilde{I_{j}})) & \le & \sum_{j=1}^{K}\sum_{I\in K_{near}}P_{H_{1}}(\boldsymbol{X}(I)\ge\boldsymbol{X}(\tilde{I}_{j}))+\sum_{j=1}^{K}P_{H_{1}}(\max_{I\in K_{1}\backslash K_{near}}\boldsymbol{X}(I)\ge\boldsymbol{X}(\tilde{I}))\\
 & =: & (A)+(B)
\end{eqnarray*}

For part (A), as in the proof of Theorem \ref{thm: Main theorem},
for each $j=1,\ldots,K$, $P_{H_{1}}(\boldsymbol{X}(I)\ge\boldsymbol{X}(\tilde{I}_{j}))\le\log^{-\eta}n$
for all $\eta>0$. Note that if any signal $I$ intersects with more
than one element in $I^{*}$, then we must have $\min_{j=1}^{K}D(I,I_{j}^{*})\ge1-1/\sqrt{\log n}$
by our assumption of $d_{min}$, and thus such $I\notin K_{near}$.
Thus, the cardinality of the set $K_{near}$ is $O(\log^{2}nK)=O(\log^{p+2}n)$.
As a result, 
\[
\sum_{j=1}^{K}\sum_{I\in K_{near}}P_{H_{1}}(\boldsymbol{X}(I)\ge\boldsymbol{X}(I_{j}^{*}))\rightarrow0.
\]

For part (B), in this situation $1-D(I,I_{j}^{*})\le1/\sqrt{\log n}$
for all $j=1,\ldots,K$. Thus
\begin{eqnarray*}
 &  & P_{H_{1}}(\max_{I\in K_{1}\backslash K_{near}}\boldsymbol{X}(I)\ge\boldsymbol{X}(\tilde{I}_{j}))\\
 & = & P_{H_{1}}(\max_{I\in K_{1}\backslash K_{near}}(\boldsymbol{Z}(I)+(1-D(I,I_{j}^{*}))(\sqrt{2\log\frac{en}{|I_{j}^{*}|}}+b_{n,j})-\sqrt{2\log\frac{en}{|I|}})\\
 &  & \qquad\ge\boldsymbol{Z}(\tilde{I}_{j})+(1-D(\tilde{I}_{j},I_{j}^{*}))(\sqrt{2\log\frac{en}{|I_{j}^{*}|}}+b_{n,j})-\sqrt{2\log\frac{en}{|\tilde{I}_{j}|}}))\\
 & \le & P_{H_{1}}(\max_{I\in K_{1}\backslash K_{near}}(\boldsymbol{Z}(I)-\sqrt{2\log\frac{en}{|I|}}+\frac{1}{\sqrt{\log n}}(\sqrt{2\log\frac{en}{|I_{j}^{*}|}}+b_{n,j}))\\
 &  & \qquad\ge\boldsymbol{Z}(\tilde{I}_{j})-\sqrt{2\log\frac{en}{|\tilde{I}_{j}|}}+(1-\frac{1}{3\sqrt{\log\frac{n}{|I_{j}^{*}|}}})(\sqrt{2\log\frac{en}{|I_{j}^{*}|}}+b_{n,j}))\\
 & \le & P_{H_{1}}(\max_{I\in K_{1}\backslash K_{near}}(\boldsymbol{Z}(I)-\sqrt{2\log\frac{en}{|I|}})\\
 &  & \qquad\ge\boldsymbol{Z}(\tilde{I}_{j})-(\sqrt{2\log\frac{en}{|\tilde{I}_{j}|}}-\sqrt{2\log\frac{en}{|I_{j}^{*}|}})\\
 &  & \qquad\quad-(\frac{1}{3\sqrt{\log\frac{n}{|I_{j}^{*}|}}}+\frac{1}{\sqrt{\log n}})(\sqrt{2\log\frac{en}{|I_{j}^{*}|}})+(1-\frac{1}{3\sqrt{\log\frac{n}{|I_{j}^{*}|}}}-\frac{1}{\sqrt{\log n}})b_{n,j}
\end{eqnarray*}

By Lemma \ref{lem: Tail_Of_L}, $P_{H_{1}}(\max_{I\in K_{1}\backslash K_{near}}(\boldsymbol{Z}(I)-\sqrt{2\log\frac{en}{|I|}})-Z(\tilde{I}_{j})>\kappa)\le C\exp(-\kappa^{2}/8)$.
Notice that $(\frac{1}{3\sqrt{\log\frac{n}{|I_{j}^{*}|}}}+\frac{1}{\sqrt{\log n}})(\sqrt{2\log\frac{en}{|I_{j}^{*}|}})=O(1)$
and $(1-\frac{1}{3\sqrt{\log\frac{n}{|I_{j}^{*}|}}}-\frac{1}{\sqrt{\log n}})b_{n,j}\gg\sqrt{\log\log n}$,
thus $P_{H_{1}}(\max_{\hat{I}\in K_{1}\backslash K_{near}}\boldsymbol{X}(I)\ge\boldsymbol{X}(\tilde{I}_{j}))\le\log^{-\eta}n$
for all $\eta>0$. So $\sum_{j=1}^{K}P_{H_{1}}(\max_{I\in K_{1}\backslash K_{near}}\boldsymbol{X}(I)\ge\boldsymbol{X}(\tilde{I}_{j}))\rightarrow0$.
As a result, $P_{H_{1}}(\max_{\hat{I}\in K_{1}}(\boldsymbol{X}(I)\ge\min_{j=1}^{K}\boldsymbol{X}(\tilde{I_{j}}))\rightarrow0$
and we finish our proof for \eqref{eq: Consistency of multiple signals}. 

Note that for each iteration, we may only remove an interval very close to $I_{j}^{*}$
for some $j=1,\ldots,K$. So after $K$ iteration, there may still
exist intervals $I\in\mathcal{I}_{app}^{K}$ such that $I\cap I_{j}^{*}\ne\emptyset$
for some $j$, denote these intervals by $\mathcal{I}_{left}$. By
similar argument as above, we can show that $P(\max_{I\in\mathcal{I}_{left}}X(I)>\gamma_{n}(\alpha))\rightarrow0$.
Now consider all intervals in $\mathcal{I}_{app}^{K}\backslash\mathcal{I}_{left}$,
By the definition of $\gamma_{n}(\alpha)$ and noticing the fact that
$\gamma_{n}(\alpha)$ is non-decreasing in $n$, we can conclude that
the number of the false intervals our procedure identifies is controlled
by a geometric distribution with parameter $\alpha$. Thus $E\hat{K}\le K+\frac{\alpha}{1-\alpha}+o(1)$
and \eqref{eq: Expect of multiple signals} follows by letting $C(\alpha)=\frac{\alpha}{1-\alpha}$.
\end{proof}

\subsubsection*{Proof of Theorem \ref{thm:(Lower-bound)-Multi}: }
\begin{proof}
Assuming without loss of generality that $\frac{n}{|I_{1}^{*}|}$
is an integer. Assume first that the signals can only start and end
in a grid given by $\{i|I_{1}^{*}|+1,\ldots(i+1)|I_{1}^{*}|\}$ for
$i=1,\ldots,\frac{n}{|I_{1}^{*}|}$. According to \cite{jeng2010optimal},
it is enough to show that Theorem \ref{thm:(Lower-bound)-Multi} holds
under this assumption. Let $R_{i}=\sum_{j=1}^{|I_{1}^{*}|}X_{i|I_{1}^{*}|+j}/\sqrt{|I_{1}^{*}|}=:r_{i}+Z_{i}^{'}$
for $i=0,\ldots,n^{'}-1$, where $n^{'}=\frac{n}{|I_{1}^{*}|}$. Then
$Z_{i}^{'}\stackrel{iid}{\sim}N(0,1)$, and $r_{i}=0$ for all but
$K$ locations, while at these locations, $r_{i}\le\sqrt{2\log\frac{en}{|I_{1}^{*}|}}=\sqrt{2\log en^{'}}$.
Since $\log K=b\log\log n=o(\log n)$, by Lemma 1 in \cite{jeng2010optimal},
no identification procedure can be consistent under this model. 
\end{proof}

\section*{References}

\bibliographystyle{abbrv}
\bibliography{reference}

\end{document}